% -*- ispell-dictionary: "american"; -*-

\documentclass{llncs}
\title{On the complexity of freezing automata networks of bounded pathwidth\thanks{Research partially supported by projects STIUC-AMSUD 22-STIC-02 (all authors), Fondecyt-ANID 1200006 (EG), FONDECYT-ANID 1230599 (PM), ANID FONDECYT Postdoctorado 3220205 (MRW)}} %TODO Please add

\author{Eric Goles\inst{1} \and Pedro Montealegre\inst{1} \and Martín Ríos-Wilson\inst{1} \and G.\ Theyssier\inst{2}}
\institute{Facultad de Ingeniería y Ciencias, Universidad Adolfo Ibáñez, Santiago \and I2M (Aix-Marseille University, CNRS)}

\usepackage{tikz,amsmath,amsfonts,diagbox}
\usetikzlibrary{arrows,decorations}

\newcommand\N{\mathbb{N}}
\newcommand{\cO}{\mathcal{O}}

\newcommand{\DLOG}{\textbf{DLOGSPACE}}

\newcommand\fop{FO$^+$}

%%%%%%%% para poner los resultados en referencias

\begin{document}

\maketitle

\begin{abstract}
 An automata network is a graph of entities, each holding a state from a finite set and evolving according to a local update rule which depends only on its neighbors in the network's graph. It is freezing if there is an order on the states such that the state evolution of any node is non-decreasing in any orbit. They are commonly used to model epidemic propagation, diffusion phenomena like bootstrap percolation or cristal growth. 
 
 Previous works have established that, under the hypothesis that the network graph is of bounded treewidth, many problems that can be captured by trace specifications at individual nodes admit efficient algorithms. In this paper we study the even more restricted case of a network of bounded pathwidth and show two hardness results that somehow illustrate the complexity of freezing dynamics under such a strong graph constraint. First, we show that the trace specification checking problem is NL-complete. Second, we show that deciding first order properties of the orbits augmented with a reachability predicate is NP-hard.
\end{abstract}
\section{Introduction}

Automata networks (AN) are finite dynamical systems that can be seen as the finite and non-uniform counterpart of cellular automata on arbitrary graphs. An automata network is \emph{freezing} if there is an order on the states such that the state evolution of any node is non-decreasing in any orbit. Several models that received a lot of attention in the literature are actually freezing automata networks, for instance: bootstrap percolation which has been studied on various graphs \cite{Amini_2014,Balogh_2012,Balogh_2005,Holroyd03}, epidemic \cite{Fuentes} or forest fire \cite{forestfire} propagation models, cristal growth models \cite{ulam,Gravner98} and various models of self-assembly tilings \cite{Winslow16}.

The freezing condition has strong implications on the computational complexity of these systems. For instance, following previous works on cellular automata \cite{dmtcs:9004,GolOlThey15}, it was established in \cite{papertw} that a large set of problems specified by traces at individual nodes are actually NC when considering freezing automata networks of bounded treewidth.
This result in particular captures the problem of nilpotency, a property which can be expressed in the language of orbits by: all orbits converge to the same fixed point. The nilpotency problem is typical of the computational complexity collapse when the freezing condition is combined by a condition on the structure of the network.

This paper aims at better understanding this complexity collapse by giving lower bounds for freezing automata networks on the simplest network structure: graphs of \emph{bounded pathwidth} (intuitively, that are structurally close to a line or a cycle).

 First, we consider regular trace properties (\textit{i.e.} regular expressions specifying allowed traces at each node) and show that the problem of existence of an orbit following the constraints is NL-complete (Theorem~\ref{theo:nlhard}). Note that this problem is similar to some well-studied problems in 1D cellular automata like cylinder-to-cylinder reachability which can also be expressed as a regular expression of traces \cite{DelvenneKB06,dmtcs:9004}. It is striking to compare the finite context with the NL upper bound above to the infinite context, where freezing cellular automata have actually an undecidable cylinder-to-cylinder reachability problem \cite{dmtcs:9004}.

 Second, we study another family of problems : properties defined by first order logic on configuration with equality, a predicate ${x\to y}$ meaning that $y$ can be reached from $x$ in one step, and a predicate ${x\to^+y}$ meaning that configuration $y$ can be reached from configuration $x$ in some number of steps. This logic denoted \fop also captures nilpotency by ${\exists! y,\forall x:x\to^+y}$. Our second main result is that, although nilpotency is {co-NL} (Corollary~\ref{cor:nilconl}), the model checking of \fop is NP-hard even for freezing automata networks defined on a line (Theorem \ref{theo:fop-hardness}).

\section{Definitions and notations}
%\TODO{I'll leave this like that, we can probably recycle some definitions.}
Given a graph $G=(V,E)$ and a vertex $v$ we will call $N(v)$ the neighborhood of $v$ and $\delta_v$ to the degree of $v$.  In addition, we define the closed neighborhood of $v$ as the set $N[v] = N(v) \cup \{v\}$ and we use the following notation $\Delta(G) = \max \limits_{v \in V} \delta_v$  for the maximum degree of $G$. We will use the letter $n$ to denote the order of $G$, i.e. $n = |V|$.  Also, if $G$ is a graph and the set of vertices and edges is not specified we use the notation $V(G)$ and $E(G)$ for the set of vertices and the set of edges of $G$ respectively. In addition, we will assume that if $G =(V,E)$ is a graph then, there exists an ordering of the vertices in $V$ from $1$ to $n$.  During the rest of the text, every graph $G$ will be assumed to be connected and undirected.  We define a \textit{class} or a \textit{family} of graphs as a set $\mathcal{G}=\{G_n\}_{n \geq 1}$ such that $G_n = (V_n,E_n)$ is a graph and $|V_n| = n$. 

%We will be particularly interested in studying certain classes of graphs during the rest of this work, whether because we will exploit its properties in order to derive efficient algorithms that will allow us to solve specific decision problems or whether because we will show that these problems will become very hard to solve for these classes.   
\emph{Non-deterministic freezing automata networks.}
Let $Q$ be a finite set that we will call an \textit{alphabet}. We define a non-deterministic automata network in the alphabet $Q$ as a tuple $(G=(V,E),\mathcal{F}=\{F_v : Q^{N(v)} \to \mathcal{P}(Q) | v \in V\}))$ where $\mathcal{P}(Q)$ is the power set of $Q$. To every non-deterministic automata network we can associate a non-deterministic dynamics given by the global function $F: Q^n \to \mathcal{P}(Q^n) $ defined by $F(x) = \{ y \in Q^n | y_v \in F_v(x),\forall v\}.$
\begin{definition}
	Given a a non-deterministic automata network $(G,\mathcal{F})$ we define an orbit of a configuration $x \in Q^n$ at time $t$ as a sequence $(x_s)_{0\leq s\leq t}$ such that $x_0 = x$ and $x_{s} \in F(x_{s-1}).$ In addition, we call the set of all possible orbits  at time $t$ for a configuration $x$ as $\mathcal{O} (x,t)$. Finally, we also define the set of all possible orbits at time $t$ as $\mathcal{O}(\mathcal{A},t
	) = \bigcup \limits_{x \in Q^n} \mathcal{O}(x,t)$
\end{definition}
We say that a non-deterministic automata network  $(G,\mathcal{F})$ defined in the alphabet $Q$ satisfies the \textit{freezing property} or simply that it is \textit{freezing} if there exists a partial order $\leq$ in $Q$ such that for every $t  \in \N$ and for every orbit $y = (x_s)_{0\leq s \leq t} \in \mathcal{O}(\mathcal{A},t)$ we have that $x^i_s \leq x^i_{s+1}$  for every $0 \leq s \leq t$ and for every $0\leq i \leq n.$

\emph{Path decompositions and pathwidth.} Let $G = (V,E)$ be a connected graph. A subgraph $P$ of $G$ is said to be a path if $V(P) = \{v_1,\hdots,v_k\}$ where every $v_i$ is different and $E(P) = \{v_1v_2 , v_2v_3\hdots, v_{k-1}v_k \}$. Now we present a graph parameter called \emph{pathwidth} which, generally speaking, indicates how similar a graph is to a path graph. More precisely, we have the following definition:
\begin{definition}
	Given a graph $G= (V,E)$ a path decomposition is pair $ \mathcal{D} = (P,\Lambda)$ such that $P$ is a path graph and $\Lambda$ is a family of subsets of nodes  $\Lambda = \{X_t \subseteq V | \text{ } t \in V(P)=\{1,\hdots,s\} \}$, called bags, such that:
	\begin{itemize}
		\item Every node in $G$ is in some $X_t$, i.e: $\bigcup \limits_{t \in V(P)} X_t = V,$
		\item For every $e=uv \in E$ there exists $t \in V(P)$ such that $u,v \in X_t,$
		\item For every $u,v,w \in V(P)$ if $1\leq u <v<w\leq s$ then, $X_u \cap X_w \subseteq X_v.$
	\end{itemize}
\end{definition}
We define the width of a path decompostion $\mathcal{D}$ as the amount $ \text{width} (\mathcal{D}) = \max  \limits_{t \in V(P)}|X_t|-1$.  Given a graph $G= (V,E)$, we define its pathwidth as the parameter  $\text{path}(G) = \min \limits_{\mathcal{D}} \text{width}(\mathcal{D})$. In other words, the pathwidth is the minimum width of a path decomposition of $G$. Note that, if $G$ is a connected graph such that $|E(G)| \geq 2$ then, $G$ is a path  if and only if $\text{path}(G) = 1$. 

It is known that a path decomposition of minimum width can be computed in $\DLOG$ \cite{kintali2012computing}.

\emph{Specification checking problem.}
Now, we introduce a decision problem called \emph{specification checking problem.} Roughly, this problem ask for the existence of an orbit in the automata network that verifies some trace constraints at each node. The information of allowed traces at each node is called a \emph{specification}: a specification of length $t$ is a map $\mathcal{E}_t: V \to \mathcal{P}(Q^t)$ such that, for every $v\in V$, the sequences in $\mathcal{E}_t(v)$ are non-decreasing (and thus respect the freezing condition). We say that $\mathcal{E}_t$ is satisfiable by $\mathcal{A}$ if there exists an orbit  $O \in \mathcal{O}(\mathcal{A},t)$ such that $O_v \in \mathcal{E}_t(v)$ for every $v \in V.$ 
We observe that the number of freezing traces of length $t$ is polynomial in $t$ so $\mathcal{E}_t$ can be represented in polynomial size in $V$ and $t$.

Also, in the absence of explicit mention, all the considered graphs will have bounded degree $\Delta$ by default, so a freezing automata network rule can be represented as the list of local update rules for each node which are maps of the form ${Q^\Delta\to \mathcal{P}(Q)}$ whose representation as transition table is of size ${O\bigl(|Q|^{\Delta+1}\bigr)}$  .
%\begin{definition}
%	Let $t \in \N$. We say that a $(V,Q,t)$-specification $\mathcal{I}$ is parallelizable if it is of the form $\mathcal{I} = \{s\in (Q^t)^V : s_v\in \mathcal{I}_v, \forall v\}$  where  $\mathcal{I}_v$ are $(\{v\},Q,t)$-specifications.
%\end{definition}
The specification checking problem (\textsf{SPEC}) introduced in \cite{papertw} asks whether a given freezing automata network satisfies a given specification.
If $\mathcal{E}_t$ is a satisfiable $t$-specification for some automata network $\mathcal{A}$ we write $\mathcal{A} \models \mathcal{E}_t.$ 

% We present now the $\textit{Specification checking problem}$ as the problem of verifying whether a given $t$-specification is satisfiable by some automata network $\mathcal{A}.$
% \begin{problem}[Specification checking problem (\textsc{SPEC})]\ 
%   \label{prob:spectchecking}
%   \begin{description}
%   \item[Parameters: ] alphabet $Q$, family of graphs $\mathcal{G}$ of max degree $\Delta$.
%   \item[Input: ]\ 
%     \begin{enumerate}
%     \item a non-deterministic freezing automata network
%       $\mathcal{A}=(G,F)$ on alphabet $Q$, with set of nodes
%       $V$ and ${G\in\mathcal{G}}$;
%       \item a time $t\in \mathbb{N}$.
%     \item a $t$-specification $\mathcal{E}_t$
%   \end{enumerate}
% \item[Question: ] $\mathcal{A} \models \mathcal{E}_t$
%   \end{description}
% \end{problem}

In \cite{papertw} it is shown that many well-known and well-studied decision problems related to the dynamics of automata networks are somehow related to \textsf{SPEC}. These problems are: the prediction problem, the predecessor problem, the nilpotency problem and the asynchronous reachability problem. Recall that nilpotency is the property that there is a configuration $x$ such that all orbits end up in $x$ and $x$ is a fixed point. Most of these problems are sub-problems of  \textsf{SPEC}. In the case of nilpotency, an efficient parallel Turing reduction can be constructed \cite{papertw}.   

In this paper, we focus on a variant of the specification problem were admissible traces are represented as regular expressions. More precisely,  a regular ${(Q,V)}$-specification is a map from $V$ to regular expressions over alphabet $Q$. We therefore consider the Regular Specification Checking Problem or simply \textsf{REGSPEC} which is the same as \textsf{SPEC} except that the specification must be a regular specification. It is interesting to observe that \textsf{REGSPEC} with fixed degree and fixed treewidth and with alphabet as unique parameter is ${W[2]}$-hard \cite{papertw}.

\paragraph{Recap of implicit hypothesis:} without explicit mention, our default object are non-deterministic freezing AN on connected and undirected graph of bounded degree $\Delta$.

\section{\textbf{NL}-completeness of REGSPEC problem}

In this section, we explore different results for the complexity of \textsf{REGSPEC} when the pathwidth of the underlying interaction graph is bounded. We start this section by showing that \textsf{REGSPEC} is in \textbf{NL}. This is a direct extension of the results on bounded treewidth in \cite{papertw} and the technique used in  \cite{dmtcs:9004} for the prediction problem in one dimensional freezing cellular automata.  Then, we show that the problem is actually \textbf{NL}-complete by showing a logspace reduction from $(s,t)$-connectivity.

%\TODO{do it for REGSPEC}

\begin{theorem}\label{teo:NL}
The \textsf{REGSPEC} problem is in \textbf{NL} for bounded pathwidth (non-deterministic) freezing AN.
\end{theorem}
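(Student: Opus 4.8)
The plan is to build a nondeterministic logspace machine that guesses an orbit and certifies it by a single left-to-right sweep of a path decomposition, using the freezing hypothesis to keep only a logarithmic amount of information alive at any moment. Since a minimum-width path decomposition is computable in \DLOG{} (by the cited result of \cite{kintali2012computing}) and our networks have bounded degree $\Delta$ and bounded pathwidth $p$, all the graph preprocessing below stays inside logspace.

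First I would fix the decomposition so that local rules become checkable. A rule $F_v$ depends on the state of the \emph{whole} neighborhood jointly, so verifying it requires $v$ and all of $N(v)$ to be present in a common bag at the same time, something an arbitrary path decomposition does not guarantee. I would therefore enlarge every bag $X_j$ into $X_j^+=\bigcup_{v\in X_j}N[v]$. This remains a valid path decomposition (it is in fact a decomposition of the square graph $G^2$: each family of intervals $\{I_u : u\in N[v]\}$ all meets $I_v$, so their union stays contiguous), its width is at most $(p+1)(\Delta+1)-1$, and crucially any bag containing $v$ now contains all of $N[v]$. Hence for every node $v$ there is a moment in the sweep where its full neighborhood is simultaneously available.

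The central use of freezing is that every admissible trace is non-decreasing for the state order and therefore changes value at most $|Q|-1$ times. I would represent the trace $\tau_v$ of a node not as a length-$t$ word but by the bounded list of its change-times and corresponding states, which occupies only $O(|Q|\log t)$ bits. The sweep processes the bags $X_1^+,\dots,X_s^+$ from left to right while maintaining, for every currently active node, this compact encoding of its guessed trace: when a node enters the (contiguous) interval of bags containing it the machine nondeterministically guesses its compact trace, keeps it fixed while the node is active, and discards it once the node is forgotten, which by the interval property happens only once. As at most $O(p\Delta)$ nodes are active at a time, and $t$ may be taken polynomial since a freezing orbit stabilizes within $O(n|Q|)$ steps, the total stored data is $O(\log t)$.

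Verification then uses two families of checks run during the same sweep. Membership of $\tau_v$ in its regular language is tested by converting the regular specification at $v$ into an NFA and guessing an accepting run: with a counter bounded by $t$ the machine iterates over $s=0,\dots,t$, reads $\tau_v(s)$ by a lookup in the compact encoding, and follows a guessed transition. Validity of the local rule at $v$ is tested at any bag containing $v$, where all compact traces of $N[v]$ are present: iterating $s$ again with a counter, the machine reconstructs $x_s|_{N(v)}$ and verifies $\tau_v(s+1)\in F_v(x_s|_{N(v)})$. Since each node is guessed exactly once and every regular constraint and every local-rule constraint is eventually checked, an accepting computation exists if and only if some orbit satisfies the specification; all counters, NFA states and active traces are logarithmic, so the whole procedure lies in \textbf{NL}. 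I expect the main obstacle to be precisely the joint dependence of $F_v$ on the entire neighborhood, which rules out an edge-by-edge check and is what forces the passage to the square graph so that each closed neighborhood is captured inside a single bag.
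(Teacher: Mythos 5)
Your proposal follows essentially the same route as the paper's proof: compute a minimum-width path decomposition in \DLOG{} via \cite{kintali2012computing}, sweep its bags from left to right, non-deterministically guess for each node a compact encoding of its trace (the list of change-times, which freezing makes polynomially short), rely on the interval property of path decompositions to guarantee that each node is guessed exactly once and consistently across the bags that contain it, and check membership of compact traces in the regular languages in \DLOG. The only presentational difference is that the paper delegates the issue of local-rule verification to the dynamic programming lemma of \cite{papertw}, whereas you handle it explicitly by enlarging bags to closed neighborhoods.

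That explicit step contains one incorrect claim, though it is repairable in a line. You assert that in the enlarged decomposition ``any bag containing $v$ now contains all of $N[v]$.'' This is false: if $v$ lies in $X_j^+$ only because $v\in N[u]$ for some $u\in X_j$, nothing forces the other neighbors of $v$ into $X_j^+$. What is true, and is all your argument needs, is that for every \emph{original} bag $X_j$ with $v\in X_j$, the enlarged bag $X_j^+$ contains $N[v]$; consequently the local-rule check for $v$ must be anchored at the positions of the sweep where $v$ belongs to an original bag, not at an arbitrary enlarged bag containing $v$ (where some traces of $N[v]$ may not yet, or no longer, be stored). With that correction your verification goes through. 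A second, minor remark: the aside that ``$t$ may be taken polynomial since a freezing orbit stabilizes within $O(n|Q|)$ steps'' is both unnecessary and delicate --- as in the paper's proof, the time horizon $t$ is part of the instance, and stabilization alone does not bound the least common length simultaneously accepted by all the per-node regular languages (after stabilization the traces are constant, and forcing a common accepted length is an intersection-of-unary-languages constraint whose least solution can be exponential); so you should simply drop that claim and read $\log t$ as logarithmic in the input size.
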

\begin{proof}
Let $t \in \N$ a time, $\mathcal{A} = (G, \mathcal{F})$ a non-deterministic automata network and $\mathcal{E}_t$ a $t$-specification. First note that if $G$ has bounded pathwidth, one can compute a path decomposition $\mathcal{P} = (X_{1}, \hdots, X_{p})$ in $\DLOG$ where $|X_i| \leq \text{pw}(G)$ for all ${1\leq i\leq p}$ (see \cite{kintali2012computing}). Now note that we can adapt the \textbf{NC} algorithm of \cite[Theorem 25]{papertw} to an \textbf{NL} algorithm in this particular context. First, observe that the dynamic programming lemma \cite[Lemma 19]{papertw}  is also valid in this case, but now, because the decomposition is a path, there is only one bag for each level. Then, observe that testing whether a trace in compact representation (as  explained earlier and presented in \cite{papertw}) belongs to some regular language can be done in $\DLOG$.  Then, the algorithm will reproduce the same procedure than the algorithm in \cite[Theorem 25]{papertw} , but, instead of parallelizing the information for the nodes in a bag storing it in different processors, it will handle this information non-deterministically.  More precisely, an algorithm can guess a trace for each bag $X_{l}$ from ${l=1}$ to ${l=p}$ while ensuring that each node (that can appear in various bags) has the same trace in all guesses: this can be done because, by definition of a path decomposition, a node appears in an interval of ${[1,p]}$. This is the major difference with \cite{papertw} that has to deal with tree decompositions.  Thus, \textsf{REGSPEC} problem is in \textbf{NL}.\qed
\end{proof}

%\TODO{bib is not working as inteded. I added some references but i cannot compile them correctly}
The complement of the nilpotency problem can be reduced to instances of \textsf{REGSPEC} in such a way that we keep the strong complexity upper-bounds from the previous theorem.

\begin{corollary}\label{cor:nilconl}
  The nilpotency problem is in \textbf{co-NL} for bounded pathwidth freezing AN.
\end{corollary}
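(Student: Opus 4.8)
The plan is to reduce the complement of nilpotency to \textsf{REGSPEC} and invoke Theorem~\ref{teo:NL} together with the fact that \textbf{NL} is closed under complement (Immerman--Szelepcsényi), so that membership of the complement in \textbf{NL} yields nilpotency in \textbf{co-NL}. Recall that $\mathcal{A}$ is nilpotent if there is a configuration $x$ that is a fixed point and to which every orbit converges. I would first isolate the two ways nilpotency can fail: either there is no common fixed point reached by all orbits, or (more usefully for a reduction) there exist two orbits that stay distinct forever, or an orbit that never stabilizes. The key observation I would exploit is that in a \emph{freezing} AN the state of every node is non-decreasing, so along any orbit of length $t = |Q|\cdot n$ each node changes state at most $|Q|-1$ times and hence every orbit must stabilize within $t$ steps. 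Consequently $\mathcal{A}$ fails to be nilpotent if and only if there exist two distinct orbits of length $t$ whose final configurations differ, or a single orbit whose final configuration is not a fixed point.

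The main construction would encode ``non-nilpotency'' as the existence of an orbit meeting a suitable regular specification. Concretely, I would build (in \DLOG) a regular $t$-specification on a modified network that accepts precisely those pairs of orbits witnessing a difference. One clean way is the standard product-of-two-copies trick: run two synchronized copies of $\mathcal{A}$ on the disjoint union $G \sqcup G$, which still has bounded pathwidth and bounded degree, and design a regular specification at each node that permits any freezing trace but records whether the two copies ever disagree at that node's final coordinate. Witnessing non-nilpotency then amounts to satisfiability of a specification asserting ``the two final configurations differ somewhere.'' Since a regular specification is attached per node while the desired property (``differ \emph{somewhere}'') is a disjunction over nodes, I would handle this by a logspace disjunction: for each node $v$ we produce one \textsf{REGSPEC} instance demanding disagreement at $v$, and non-nilpotency holds iff at least one of the polynomially many instances is satisfiable. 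This existential quantifier over $v$ is itself computable in \textbf{NL}, and composing \textbf{NL} with the \textbf{NL} algorithm of Theorem~\ref{teo:NL} stays in \textbf{NL}. A parallel family of instances checks that some reachable final configuration is not a fixed point.

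The step I expect to be the main obstacle is encoding the freezing-specific stabilization bound and the ``differs somewhere / is not a fixed point'' conditions as genuinely \emph{regular} per-node specifications in logspace, while keeping the reduction within bounded pathwidth and bounded degree. In particular, each per-node regular expression must enforce that traces are non-decreasing of length $t$ (which is fine, as freezing traces over $Q$ form a regular language) and must permit \DLOG{} construction from the transition tables; I would rely on the polynomial bound on the number of freezing traces noted in the excerpt to keep the expressions small. Once the family of \textsf{REGSPEC} instances is in place, the conclusion follows: non-nilpotency reduces to an \textbf{NL}-computable disjunction of \textsf{REGSPEC} queries, placing it in \textbf{NL}, whence nilpotency is in \textbf{co-NL}. \qed
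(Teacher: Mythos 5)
Your overall skeleton is the same as the paper's: reduce non-nilpotency to polynomially many \textsf{REGSPEC} instances built from two coupled copies of the network, observe that an \textbf{NL} machine can guess which instance to solve via Theorem~\ref{teo:NL}, and conclude co-\textbf{NL} membership (the appeal to Immerman--Szelepcs\'enyi is superfluous: once the complement is in \textbf{NL}, nilpotency is in co-\textbf{NL} by definition). However, two of your steps do not hold as written. First, the stabilization bound is false for the paper's default objects, which are \emph{non-deterministic} freezing AN: since $F_v(x)$ may contain $x_v$, an orbit can idle for arbitrarily many steps and change afterwards, so it is not true that every orbit of length $t=|Q|\cdot n$ has stabilized. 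Your bounded-horizon characterization of non-nilpotency happens to remain true, but for a subtler reason you do not supply (in a nilpotent network, a configuration $c\neq x$ with $c\in F(c)$ would yield a constant orbit not converging to $x$, so idling can only occur at the limit fixed point). The paper sidesteps this entirely by never fixing a time horizon: it uses the length-unbounded expression ${(Q\times Q)^\ast(q,q')^+}$, which asks that the trace at a node \emph{end} in a given disagreeing pair, for an orbit of arbitrary length.

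Second, and more structurally, you place the two copies on the disjoint union $G\sqcup G$. A specification assigns an \emph{independent} regular language to each node, so the admissible trace tuples form a product over nodes; ``the two copies disagree at $v$'' couples two distinct nodes (the copy of $v$ in each component) and cannot be expressed by any single per-node assignment. Hence your ``one \textsf{REGSPEC} instance per node demanding disagreement at $v$'' is not well-defined. The paper's fix is exactly the missing ingredient: take the product network $F\times F$ over alphabet $Q\times Q$ on the \emph{same} graph $G$ (which also preserves the paper's standing connectedness assumption, violated by $G\sqcup G$), so that both copies' states at $v$ are visible at the single node $v$, and then enumerate the disagreeing pairs, giving one instance $\mathsf{NONIL}(q,q',v)$ with expression ${(Q\times Q)^\ast(q,q')^+}$ at $v$ for each $(v,q,q')$ with $q\neq q'$. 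Finally, your ``parallel family'' of instances checking that ``some reachable final configuration is not a fixed point'' is both vague (being a fixed point is not a per-node trace property, especially for non-deterministic dynamics) and unnecessary: in a freezing network every maximal orbit is eventually constant, hence converges to a fixed point, so non-nilpotency is already equivalent to the existence of two orbits converging to fixed points that differ at some node, which is all the paper's disjunction tests.
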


\begin{proof}
  For a freezing AN $F$ over alphabet $Q$, the property of \textbf{not} being nilpotent is equivalent to the existence of a pair of orbits that ends up in two fixed points that differ at some node. For any pair of states $q$ and $q'$ and some node $v$, denote by \textsf{NONIL}$(q,q',v)$ the problem of existence of two orbits in $F$ that end respectively in states $q$ and $q'$ at node $v$. \textsf{NONIL}$(q,q',v)$ is actually a \textsf{REGSPEC} problem for the AN $F\times F$ over alphabet $Q\times Q$ given by the following regular expression for trace at node $v$: ${(Q\times Q)^\ast(q,q')^+}$. Then, non-nilpotency can be expressed as the disjunction
  $$\bigvee_{v\in V}\bigvee_{q\neq q'}\mathsf{NONIL}(q,q',v).$$
  From this, we deduce a \textbf{NL} algorithm for non-nilpotency: choose non-deterministically one of the polynomially many instances of \textsf{NONIL} above and solve it in \textbf{NL} as an instance of the \textsf{REGSPEC} problem (Theorem~\ref{teo:NL}). We deduce that nilpotency is \textbf{co-NL}.\qed
\end{proof}

We now show that \textsf{REGSPEC} is \textbf{NL}-complete and thus, it is most likely that the previous algorithm is the best we can do, unless $\textbf{NL} = \DLOG.$ 
%In order to do that, we need to introduce first some technical results.

%\begin{lemma}
%For any string $x \in \{0,1\}^{*}$ and $p \in \N$, there exists a deterministc automata network $\mathcal{A}_x = (G_x, F_x)$  on some alphabet $Q_x$  and a configuration $s(x)$ such that $|s(x)| = 2 |x|$ satisfying the following property: there exist $t \in \N$ and $q^* \in Q_x$ such that $F^t_x(s) = q^*$ if and only if $x$ is periodic of period $p$. Moreover, there exists a $\DLOG$ algorithm which takes $x$ and produces $s(x)$, $Q_x$ is constant on $|x|$ and $t = |x|^{\mathcal{O}(1)}$.
%\end{lemma}
%\begin{proof}
%Given $x \in \{0,1\}^{*}$ and $p \in \N$, the algorithm first writes a special symbol in the position $(r-1)p$ for $r = 1 \hdots,$
%\end{proof}
Now we introduce the main result of the section.

\begin{theorem}\label{theo:nlhard}
The Regular Specification Checking problem (\textsf{REGSPEC}) is \textbf{NL}-complete when restricted to bounded degree (non-deterministic) freezing AN with bounded pathwidth interaction graphs.
\end{theorem}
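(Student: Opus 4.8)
Since Theorem~\ref{teo:NL} already establishes the \textbf{NL} upper bound for bounded pathwidth freezing AN, it remains to prove \textbf{NL}-hardness, and the plan is to give a logspace reduction from directed $(s,t)$-connectivity, the canonical \textbf{NL}-complete problem: given a directed graph $H=(W,A)$ with $|W|=n$ and two distinguished vertices $s,t$, decide whether $t$ is reachable from $s$. The guiding idea is that a reachability certificate is just a walk of length at most $n-1$, and such a walk can be encoded directly into the \emph{temporal} trace of a freezing AN whose interaction graph is as simple as possible, so that the only information that must travel does so through time (the state component) rather than through space.

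Concretely, I would take the interaction graph to be a single computing vertex (equivalently, a two-node line if one insists on a genuine path), whose local rule reads its own current state; this graph has pathwidth at most $1$ and degree bounded by a constant, so both the degree and the pathwidth are bounded as required. The alphabet would be the layered set $Q=\bigl(W\times\{0,1,\dots,n-1\}\bigr)$, whose size is polynomial in $n$ and hence admissible as part of the input. The local rule implements the nondeterministic transition relation that sends $(u,j)$ to $(w,j+1)$ whenever $(u,w)\in A$, together with a self-loop $(t,j)\to(t,j+1)$ at the target used for padding. The crucial ingredient is the freezing order: I declare $(u,j)\le(w,k)$ iff $j<k$ or $(u,j)=(w,k)$, which is a genuine partial order on $Q$, and since every transition strictly increases the layer index, every orbit is non-decreasing and the network is freezing.

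The regular specification would assign to the computing node the expression $(s,0)\,Q^{\ast}\,(t,n-1)$, and the dynamics would be run for $n-1$ steps so that admissible traces have length $n$. Such a trace is a legal orbit satisfying the specification precisely when it reads a sequence $(s,0),(v_{1},1),\dots,(v_{n-2},n-2),(t,n-1)$ whose consecutive vertex pairs are arcs of $H$ (with repetitions of $t$ allowed at the end), i.e.\ precisely when there is a walk from $s$ to $t$ of length at most $n-1$. As $t$ is reachable from $s$ iff such a walk exists (a simple path has length at most $n-1$, and one pads the remaining steps with the self-loop at $t$), the specification is satisfiable iff $t$ is reachable from $s$. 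Producing the transition table and the regular expression requires only scanning the arcs of $H$ and testing arc membership, so the whole construction is computable in logspace.

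The main obstacle is reconciling the freezing constraint with the fact that reachability in an arbitrary directed graph is in no way monotone: a naive encoding that lets a single state \textbf{walk} through the vertices of $H$ would revisit vertices and hence violate the freezing property for every order on $W$. The time-stamp (layer) component is exactly what removes this obstacle, turning each walk into a strictly increasing sequence in the layered order at the cost of only a polynomial blow-up of the alphabet. Once this is in place the correctness of both directions is routine, and because the construction already lives on a line, the resulting lower bound matches the bounded pathwidth upper bound of Theorem~\ref{teo:NL}, yielding \textbf{NL}-completeness.
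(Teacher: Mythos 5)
Your reduction has a genuine gap: it makes the alphabet grow with the instance. You take $Q=W\times\{0,\dots,n-1\}$, so $|Q|=\Theta(n^2)$, and the whole digraph is encoded inside the transition relation of a single node. But the theorem's claim of \textbf{NL}-\emph{completeness} pairs your hardness with the upper bound of Theorem~\ref{teo:NL}, and that upper bound only holds when the alphabet is bounded (this is part of the paper's implicit hypotheses, cf.\ the remark that the number of freezing traces is polynomial in $t$, which is false for unbounded $Q$): the \textbf{NL} algorithm must store, per bag, compact traces of size roughly $\mathcal{O}(\mathrm{pw}\cdot|Q|\cdot\log t)$ bits, which is logarithmic in the input only for constant $|Q|$. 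In fact the paper explicitly recalls that \textsf{REGSPEC} with fixed degree and fixed treewidth but with the alphabet as parameter is $W[2]$-hard, so the unbounded-alphabet variant you reduce to is not in \textbf{P} (hence not in \textbf{NL}) unless $\mathrm{FPT}=W[2]$. So you prove \textbf{NL}-hardness of a problem for which no \textbf{NL} membership is available, and \textbf{NL}-hardness of the bounded-alphabet problem---the one actually shown to be in \textbf{NL}---does not follow. Put differently, your reduction does not really exercise the automata-network structure at all: it restates \textsf{STCON} as reachability in the state space of one cell, which is exactly what a bounded alphabet forbids.

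The paper's proof is designed around this constraint: it keeps $|Q|=\mathcal{O}(1)$ and pays for it with a long network, a grid of constant height $k=\mathcal{O}(1)$ and polynomial length $d=n^{\mathcal{O}(1)}$ (so pathwidth stays bounded). The digraph is encoded \emph{spatially}: each edge becomes a block of width $\Theta(nm)$, the regular specification forces the incidence matrix of $D$ (plus markers and the $s,t$ information) into the initial configuration, the network nondeterministically guesses a sequence of blocks (selection phase), and then deterministic signal dynamics---collisions at prescribed positions---verify that each block is well formed, codes a real edge, and that consecutive blocks share an endpoint (verification phase); the specification finally demands convergence to a uniform success state. If you want to salvage your idea, you would have to replace the time-stamped alphabet by some spatial layout of the vertex/edge information over polynomially many nodes with a constant alphabet, which is essentially what the paper's block-and-signal construction does.
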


The proof proceeds by reduction from the problem STCON consisting in deciding, given a digraph $D$ and two nodes $s$ and $t$, whether there exists a path reaching $t$ from $s$.
The main idea is to construct a non-deterministic automata network $\mathcal{A}_{D} = (G_D, \mathcal{F}_D)$ defined over a two dimensional grid of size $k \times d$ where $d = n^{\cO(1)}$ and $k = \cO(1)$. Of course, since $k$ is constant, then $\mathcal{A}_{D}$ has bounded pathwidth. This automata network will non-deterministically guess a sequences of blocks (a structure representing edges in the interaction graph of $\mathcal{A}_D$, see Figure \ref{fig:blocktext} for more details). We call this part the \emph{selection phase}. Then, the next part of the proof consists in showing that $\mathcal{A}_{D} = (G_D, \mathcal{F}_D)$ is capable of deterministically verifying if an initial condition corresponds to a sequence of valid edges, i.e.  if it corresponds to a sequence of blocks and they actually represent edges in ${D}$. We call this phase a \emph{verification phase}. In order to perform this task,  we use a construction based on using signals that will collide at specific locations as a way to verify the distance between two given cells. In addition, it would be essential to save (as a constant layer) the information contained in the incidence matrix of $D$. Generally speaking, once $\mathcal{A}_D$ has verified that the sequence of blocks is valid, it will compare two subsequent blocks (which represent a pair of edges) in order to verify if they are incident. If in any part of its dynamics $\mathcal{A}_D$ locally detects some error (by the application of its local rule), it will spread an error state that will led the system to an attractor corresponding to a uniform configuration in which any cell will be in this particular error state. However, if the process runs flawless, then the system will reach an attractor in which all the cells are in a particular success state. We will code, by using a specification $\mathcal{E}_D$ (given in the input of \textsf{REGSPEC}), a specific requirement for the initial configuration (more precisely, we will ask the initial configuration to have the incidence matrix of $D$, markers and information about the nodes $(s,t)$) in order to allow $\mathcal{A}_D$ to have enough information to start the selection and verification process. In addition, we will code in this specification only the orbits that will reach this specific success state. By doing this, we will show that $\mathcal{A}_D \models \mathcal{E}_D$ if and only if there is a path between $s,t$ in $D$. Thus, the reduction will consist on constructing $(\mathcal{A}_D, \mathcal{E}_D)$ from $(D,s,t)$  in $\DLOG.$ 
\begin{figure}
\begin{tikzpicture}[x=0.75pt,y=0.75pt,yscale=-1,xscale=1]
%uncomment if require: \path (0,211); %set diagram left start at 0, and has height of 211

%Shape: Rectangle [id:dp3381231121115147] 
\draw   (99,47+10) -- (534.5,47+10) -- (534.5,74) -- (99,74) -- cycle ;
%Shape: Rectangle [id:dp9460099554666003] 
\draw   (99,74) -- (534.5,74) -- (534.5,101-8) -- (99,101-8) -- cycle ;
%Straight Lines [id:da010059420429701316] 
\draw    (177.5,145.5) -- (250,145.02) ;
\draw [shift={(253,145)}, rotate = 539.62] [fill={rgb, 255:red, 0; green, 0; blue, 0 }  ][line width=0.08]  [draw opacity=0] (8.93,-4.29) -- (0,0) -- (8.93,4.29) -- cycle    ;
%Shape: Circle [id:dp05911742292156319] 
\draw  [fill={rgb, 255:red, 255; green, 255; blue, 255 }  ,fill opacity=1 ] (164,145.5) .. controls (164,138.04) and (170.04,132) .. (177.5,132) .. controls (184.96,132) and (191,138.04) .. (191,145.5) .. controls (191,152.96) and (184.96,159) .. (177.5,159) .. controls (170.04,159) and (164,152.96) .. (164,145.5) -- cycle ;
%Straight Lines [id:da5170526204414245] 
\draw    (266.5,144.5) -- (339,144.02) ;
\draw [shift={(342,144)}, rotate = 539.62] [fill={rgb, 255:red, 0; green, 0; blue, 0 }  ][line width=0.08]  [draw opacity=0] (8.93,-4.29) -- (0,0) -- (8.93,4.29) -- cycle    ;
%Shape: Circle [id:dp5684647751854769] 
\draw  [fill={rgb, 255:red, 255; green, 255; blue, 255 }  ,fill opacity=1 ] (253,144.5) .. controls (253,137.04) and (259.04,131) .. (266.5,131) .. controls (273.96,131) and (280,137.04) .. (280,144.5) .. controls (280,151.96) and (273.96,158) .. (266.5,158) .. controls (259.04,158) and (253,151.96) .. (253,144.5) -- cycle ;
%Straight Lines [id:da349479656169963] 
\draw    (357.5,143.5) -- (430,143.02) ;
\draw [shift={(433,143)}, rotate = 539.62] [fill={rgb, 255:red, 0; green, 0; blue, 0 }  ][line width=0.08]  [draw opacity=0] (8.93,-4.29) -- (0,0) -- (8.93,4.29) -- cycle    ;
%Shape: Circle [id:dp6890460450861843] 
\draw  [fill={rgb, 255:red, 255; green, 255; blue, 255 }  ,fill opacity=1 ] (344,143.5) .. controls (344,136.04) and (350.04,130) .. (357.5,130) .. controls (364.96,130) and (371,136.04) .. (371,143.5) .. controls (371,150.96) and (364.96,157) .. (357.5,157) .. controls (350.04,157) and (344,150.96) .. (344,143.5) -- cycle ;
%Shape: Circle [id:dp23924282248058326] 
\draw  [fill={rgb, 255:red, 255; green, 255; blue, 255 }  ,fill opacity=1 ] (435,142.5) .. controls (435,135.04) and (441.04,129) .. (448.5,129) .. controls (455.96,129) and (462,135.04) .. (462,142.5) .. controls (462,149.96) and (455.96,156) .. (448.5,156) .. controls (441.04,156) and (435,149.96) .. (435,142.5) -- cycle ;

% Text Node
\draw (101,52.4+3) node [anchor=north west][inner sep=0.75pt]    {$ \begin{array}{l}
\#_{s} \ 1\ 1\ 1\ 1\ \#_{m} \ 1\ 1\ 1\ 1\ \#\ 0\ 0\ 0\ 0\ \#_{m} 0\ 0\ 0\ 0\ \#\ 0\ 0\ 0\ 0\ \#_{m}\ 0\ 0\ 0\ 0\ \#_{s}\\
\end{array}$};
\draw (101,52.4+20) node [anchor=north west][inner sep=0.75pt]    {$ \begin{array}{l}
\#_{s} \ h\ t\ 0\ 0\ \#_{m} \ 0\ 0\ t\ h\ \#\ h\ t\ 0\ 0\  \#_{m} \ 0\ 0\ t\ h\ \#\ h\ t\ 0\ 0\ \#_{m} \ 0\ 0\ t\ h\ \#_{s}
\end{array}$};
% Text Node
\draw (172,136.4) node [anchor=north west][inner sep=0.75pt]    {$v_1$};
% Text Node
\draw (261,135.4) node [anchor=north west][inner sep=0.75pt]    {$v_2$};
% Text Node
\draw (352,134.4) node [anchor=north west][inner sep=0.75pt]    {$v_3$};
% Text Node
\draw (443,133.4) node [anchor=north west][inner sep=0.75pt]    {$v_4$};
% Text Node
\draw (209,117.4) node [anchor=north west][inner sep=0.75pt]    {$e_{1}$};
% Text Node
\draw (301,115.4) node [anchor=north west][inner sep=0.75pt]    {$e_{2}$};
% Text Node
\draw (388,115.4) node [anchor=north west][inner sep=0.75pt]    {$e_{3}$};
% Text Node
\draw (306,21.4) node [anchor=north west][inner sep=0.75pt]    {$B( e_{1})$};
% Text Node
\draw (306,173.4) node [anchor=north west][inner sep=0.75pt]    {$D$};
\end{tikzpicture}
\caption{An example of a block for some graph $D$.}
\label{fig:blocktext}
\end{figure}
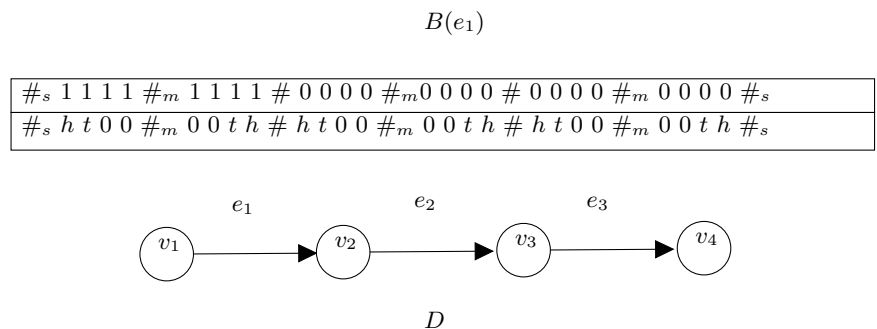

\subsection{Detailled construction and proof of Theorem~\ref{theo:nlhard}}

\begin{figure}
\centering
\tikzset{every picture/.style={line width=0.75pt}} %set default line width to 0.75pt        

\begin{tikzpicture}[x=0.75pt,y=0.75pt,yscale=-1,xscale=1]
%uncomment if require: \path (0,181); %set diagram left start at 0, and has height of 181

%Shape: Rectangle [id:dp7575091617263608] 
\draw   (96,9) -- (545,9) -- (545,36.49) -- (96,36.49) -- cycle ;
%Shape: Rectangle [id:dp7454841351652723] 
\draw   (96,36.49) -- (545,36.49) -- (545,63.99) -- (96,63.99) -- cycle ;
%Shape: Rectangle [id:dp0738906245747849] 
\draw   (96,63.99) -- (545,63.99) -- (545,91.48) -- (96,91.48) -- cycle ;
%Shape: Rectangle [id:dp4991819080796974] 
\draw   (96,91.48) -- (545,91.48) -- (545,118.98) -- (96,118.98) -- cycle ;
%Straight Lines [id:da03895297574029122] 
\draw    (130.03+10,77.23) -- (181.4+10,77.23) ;
\draw [shift={(183.4,77.23)}, rotate = 180] [color={rgb, 255:red, 0; green, 0; blue, 0 }  ][line width=0.75]    (10.93,-3.29) .. controls (6.95,-1.4) and (3.31,-0.3) .. (0,0) .. controls (3.31,0.3) and (6.95,1.4) .. (10.93,3.29)   ;
%Straight Lines [id:da47262122791098093] 
\draw  [dash pattern={on 0.84pt off 2.51pt}]  (130.03+10,77.23) -- (130.03+10,54.82) ; %Primer h
%Straight Lines [id:da6420949834272429] 
\draw    (233.11+2,77.23) -- (182.4,77.23) ;
\draw [shift={(180.4,77.23)}, rotate = 360] [color={rgb, 255:red, 0; green, 0; blue, 0 }  ][line width=0.75]    (10.93,-3.29) .. controls (6.95,-1.4) and (3.31,-0.3) .. (0,0) .. controls (3.31,0.3) and (6.95,1.4) .. (10.93,3.29)   ;
%Straight Lines [id:da9949908945460415] 
\draw  [dash pattern={on 0.84pt off 2.51pt}]  (236.11,76.21) -- (236.11,53.8) ;
%Straight Lines [id:da24702688584072507] 
\draw  [dash pattern={on 0.84pt off 2.51pt}]  (181.4,87.43) -- (181.4,53.82) ;
%Straight Lines [id:da9500178360498702] 
\draw    (259.27,77.23) -- (310.65,77.23) ;
\draw [shift={(312.65,77.23)}, rotate = 180] [color={rgb, 255:red, 0; green, 0; blue, 0 }  ][line width=0.75]    (10.93,-3.29) .. controls (6.95,-1.4) and (3.31,-0.3) .. (0,0) .. controls (3.31,0.3) and (6.95,1.4) .. (10.93,3.29)   ;
%Straight Lines [id:da34729713923875716] 
\draw  [dash pattern={on 0.84pt off 2.51pt}]  (259.27,77.23) -- (259.27,54.82) ;
%Straight Lines [id:da619422757329625] 
\draw    (373.35,77.23) -- (314.65,77.23) ;
\draw [shift={(312.65,77.23)}, rotate = 360] [color={rgb, 255:red, 0; green, 0; blue, 0 }  ][line width=0.75]    (10.93,-3.29) .. controls (6.95,-1.4) and (3.31,-0.3) .. (0,0) .. controls (3.31,0.3) and (6.95,1.4) .. (10.93,3.29)   ;
%Straight Lines [id:da957217988810102] 
\draw  [dash pattern={on 0.84pt off 2.51pt}]  (373.35,76.21) -- (373.35,53.8) ;
%Straight Lines [id:da6638233022988275] 
\draw  [dash pattern={on 0.84pt off 2.51pt}]  (312.65,88.43) -- (312.65,54.82) ;

%Straight Lines [id:da44885167647832025] 
\draw    (403.71,77.23) -- (455.08,77.23) ;
\draw [shift={(457.08,77.23)}, rotate = 180] [color={rgb, 255:red, 0; green, 0; blue, 0 }  ][line width=0.75]    (10.93,-3.29) .. controls (6.95,-1.4) and (3.31,-0.3) .. (0,0) .. controls (3.31,0.3) and (6.95,1.4) .. (10.93,3.29)   ;
%Straight Lines [id:da18793829167813203] 
\draw  [dash pattern={on 0.84pt off 2.51pt}]  (403.71,77.23) -- (403.71,54.82) ;
%Straight Lines [id:da9793710508360495] 
\draw    (517.79-12,77.23) -- (459.08,77.23) ;
\draw [shift={(457.08,77.23)}, rotate = 360] [color={rgb, 255:red, 0; green, 0; blue, 0 }  ][line width=0.75]    (10.93,-3.29) .. controls (6.95,-1.4) and (3.31,-0.3) .. (0,0) .. controls (3.31,0.3) and (6.95,1.4) .. (10.93,3.29)   ;
%Straight Lines [id:da14624768509020925] 
\draw  [dash pattern={on 0.84pt off 2.51pt}]  (517.79-12,76.21) -- (517.79-12,53.8) ;
%Straight Lines [id:da9884624894099424] 
\draw  [dash pattern={on 0.84pt off 2.51pt}]  (457.08,88.43) -- (457.08,54.82) ;

%Straight Lines [id:da6543097074977231] 
\draw    (141.68+8,105.76) -- (179.34,105.76) ;
\draw [shift={(181.34,105.76)}, rotate = 180] [color={rgb, 255:red, 0; green, 0; blue, 0 }  ][line width=0.75]    (10.93,-3.29) .. controls (6.95,-1.4) and (3.31,-0.3) .. (0,0) .. controls (3.31,0.3) and (6.95,1.4) .. (10.93,3.29)   ;
%Straight Lines [id:da5104032904548895] 
\draw  [dash pattern={on 0.84pt off 2.51pt}]  (141.68+ 8,105.76) -- (141.68 + 8,53.82) ; % Primer t
%Straight Lines [id:da20633650711054874] 
\draw    (229.59,105.76) -- (183.34,105.76) ;
\draw [shift={(181.34,105.76)}, rotate = 360] [color={rgb, 255:red, 0; green, 0; blue, 0 }  ][line width=0.75]    (10.93,-3.29) .. controls (6.95,-1.4) and (3.31,-0.3) .. (0,0) .. controls (3.31,0.3) and (6.95,1.4) .. (10.93,3.29)   ;
%Straight Lines [id:da07893348690823698] 
\draw  [dash pattern={on 0.84pt off 2.51pt}]  (228.55,104.74) -- (227.5,54.84) ;
%Straight Lines [id:da4559591504466599] 
\draw  [dash pattern={on 0.84pt off 2.51pt}]  (181.34,117.98) -- (181.34,90.48) ;

%Straight Lines [id:da7294470700044022] 
\draw    (419.41,106.76) -- (457.07,106.76) ;
\draw [shift={(459.07,106.76)}, rotate = 180] [color={rgb, 255:red, 0; green, 0; blue, 0 }  ][line width=0.75]    (10.93,-3.29) .. controls (6.95,-1.4) and (3.31,-0.3) .. (0,0) .. controls (3.31,0.3) and (6.95,1.4) .. (10.93,3.29)   ;
%Straight Lines [id:da17813430372186212] 
\draw  [dash pattern={on 0.84pt off 2.51pt}]  (419.41,106.76) -- (418.36,54.82) ;
%Straight Lines [id:da29670793314626087] 
\draw    (507.32-10,106.76) -- (461.07,106.76) ;
\draw [shift={(459.07,106.76)}, rotate = 360] [color={rgb, 255:red, 0; green, 0; blue, 0 }  ][line width=0.75]    (10.93,-3.29) .. controls (6.95,-1.4) and (3.31,-0.3) .. (0,0) .. controls (3.31,0.3) and (6.95,1.4) .. (10.93,3.29)   ;
%Straight Lines [id:da9858595138028283] 
\draw  [dash pattern={on 0.84pt off 2.51pt}]  (506.28-10,105.74) -- (505.23-10,55.84) ;
%Straight Lines [id:da10538659813943352] 
\draw  [dash pattern={on 0.84pt off 2.51pt}]  (459.07-2,118.98) -- (459.07-2,91.48) ;

%Straight Lines [id:da643069882004195] 
\draw    (276,107) -- (310.54,106.77) ;
\draw [shift={(312.54,106.76)}, rotate = 539.62] [color={rgb, 255:red, 0; green, 0; blue, 0 }  ][line width=0.75]    (10.93,-3.29) .. controls (6.95,-1.4) and (3.31,-0.3) .. (0,0) .. controls (3.31,0.3) and (6.95,1.4) .. (10.93,3.29)   ;
%Straight Lines [id:da8296474533696552] 
\draw  [dash pattern={on 0.84pt off 2.51pt}]  (273.79,106.76) -- (272.74,54.82) ;
%Straight Lines [id:da213375018976787] 
\draw    (364.98,106.76) -- (314.54,106.76) ;
\draw [shift={(312.54,106.76)}, rotate = 360] [color={rgb, 255:red, 0; green, 0; blue, 0 }  ][line width=0.75]    (10.93,-3.29) .. controls (6.95,-1.4) and (3.31,-0.3) .. (0,0) .. controls (3.31,0.3) and (6.95,1.4) .. (10.93,3.29)   ;
%Straight Lines [id:da6161949087661314] 
\draw  [dash pattern={on 0.84pt off 2.51pt}]  (363.93,105.74) -- (362.89,55.84) ;
%Straight Lines [id:da7615474946469208] 
\draw  [dash pattern={on 0.84pt off 2.51pt}]  (312.54,118.98) -- (312.54,91.48) ;
%Shape: Rectangle [id:dp5626610224966426] 
\draw   (96,118.98) -- (545,118.98) -- (545,146.47) -- (96,146.47) -- cycle ;
%Straight Lines [id:da4489733403947561] 
\draw  [dash pattern={on 0.84pt off 2.51pt}]  (236.43,137.3) -- (236.11,77.23) ;
%Straight Lines [id:da16570312943521093] 
\draw  [dash pattern={on 0.84pt off 2.51pt}]  (260.32,137.3) -- (259.27,77.23) ;
%Straight Lines [id:da5244670322088718] 
\draw  [dash pattern={on 0.84pt off 2.51pt}]  (250,172) -- (248.67,53.82) ;
%Straight Lines [id:da6921555810667452] 
\draw    (374.72,137.3) -- (384.85,137.3) ;
\draw [shift={(386.85,137.3)}, rotate = 180] [color={rgb, 255:red, 0; green, 0; blue, 0 }  ][line width=0.75]    (10.93,-3.29) .. controls (6.95,-1.4) and (3.31,-0.3) .. (0,0) .. controls (3.31,0.3) and (6.95,1.4) .. (10.93,3.29)   ;
%Straight Lines [id:da8132613207729008] 
\draw  [dash pattern={on 0.84pt off 2.51pt}]  (374.72,137.3) -- (374.4,77.23) ;
%Straight Lines [id:da017693012660833562] 
\draw    (401.61,137.3) -- (388.85,137.3) ;
\draw [shift={(386.85,137.3)}, rotate = 360] [color={rgb, 255:red, 0; green, 0; blue, 0 }  ][line width=0.75]    (10.93,-3.29) .. controls (6.95,-1.4) and (3.31,-0.3) .. (0,0) .. controls (3.31,0.3) and (6.95,1.4) .. (10.93,3.29)   ;
%Straight Lines [id:da7720810471563045] 
\draw  [dash pattern={on 0.84pt off 2.51pt}]  (401.61+3,137.3) -- (401.61+3,77.23) ;
%Straight Lines [id:da49248244126796137] 
\draw  [dash pattern={on 0.84pt off 2.51pt}]  (386.96,147.49) -- (386.96,54.82) ;
%Shape: Rectangle [id:dp4714641448028679] 
\draw   (96,146.47) -- (545,146.47) -- (545,173.96) -- (96,173.96) -- cycle ;
%Straight Lines [id:da9475457732999362] 
\draw    (229.24,164.8) -- (248.11,164.8) ;
\draw [shift={(250.11,164.8)}, rotate = 180] [color={rgb, 255:red, 0; green, 0; blue, 0 }  ][line width=0.75]    (10.93,-3.29) .. controls (6.95,-1.4) and (3.31,-0.3) .. (0,0) .. controls (3.31,0.3) and (6.95,1.4) .. (10.93,3.29)   ;
%Straight Lines [id:da8950210589980755] 
\draw  [dash pattern={on 0.84pt off 2.51pt}]  (229.24,164.8) -- (228.64,104.72) ;
%Straight Lines [id:da8843318456161351] 
\draw    (275.76,164.84) -- (250.11,164.8) ;
\draw [shift={(248.11,164.8)}, rotate = 360.08000000000004] [color={rgb, 255:red, 0; green, 0; blue, 0 }  ][line width=0.75]    (10.93,-3.29) .. controls (6.95,-1.4) and (3.31,-0.3) .. (0,0) .. controls (3.31,0.3) and (6.95,1.4) .. (10.93,3.29)   ;
%Straight Lines [id:da8830039325951607] 
\draw  [dash pattern={on 0.84pt off 2.51pt}]  (275.76,166.84) -- (273.79,106.76) ;
%Straight Lines [id:da26022876734433276] 
\draw    (364.58,165.26) -- (386.84,165.26) ;
\draw [shift={(388.84,165.26)}, rotate = 180] [color={rgb, 255:red, 0; green, 0; blue, 0 }  ][line width=0.75]    (10.93,-3.29) .. controls (6.95,-1.4) and (3.31,-0.3) .. (0,0) .. controls (3.31,0.3) and (6.95,1.4) .. (10.93,3.29)   ;
%Straight Lines [id:da20827407549314048] 
\draw  [dash pattern={on 0.84pt off 2.51pt}]  (364.58,165.26) -- (363.93,101.88) ;
%Straight Lines [id:da4337189814388781] 
\draw    (421.5,165.26) -- (388.75,165.26) ;
\draw [shift={(386.75,165.26)}, rotate = 360] [color={rgb, 255:red, 0; green, 0; blue, 0 }  ][line width=0.75]    (10.93,-3.29) .. controls (6.95,-1.4) and (3.31,-0.3) .. (0,0) .. controls (3.31,0.3) and (6.95,1.4) .. (10.93,3.29)   ;
%Straight Lines [id:da7732163081800908] 
\draw  [dash pattern={on 0.84pt off 2.51pt}]  (420.45,165.82) -- (419.41,106.76) ;
%Straight Lines [id:da24330652932541275] 
\draw  [dash pattern={on 0.84pt off 2.51pt}]  (386.96,176) -- (386.96,78.24) ;
%Straight Lines [id:da8154085229237553] 
\draw    (236.43,137.3) -- (247,137.05) ;
\draw [shift={(249,137)}, rotate = 538.61] [color={rgb, 255:red, 0; green, 0; blue, 0 }  ][line width=0.75]    (10.93,-3.29) .. controls (6.95,-1.4) and (3.31,-0.3) .. (0,0) .. controls (3.31,0.3) and (6.95,1.4) .. (10.93,3.29)   ;
%Straight Lines [id:da5255874201378098] 
\draw    (260.32,137.3) -- (252,137.06) ;
\draw [shift={(250,137)}, rotate = 361.69] [color={rgb, 255:red, 0; green, 0; blue, 0 }  ][line width=0.75]    (10.93,-3.29) .. controls (6.95,-1.4) and (3.31,-0.3) .. (0,0) .. controls (3.31,0.3) and (6.95,1.4) .. (10.93,3.29)   ;

% Text Node
\draw (106+10,13.4) node [anchor=north west][inner sep=0.75pt]    {$ \begin{array}{l}
\#_{s} \ 1\ 1\ 1\ 1\ \#_{m} \ 1\ 1\ 1\ 1 \#\ 0\ 0\ 0\ 0\ \#_{m} 0\ 0\ \ \ \ \ \ \  0\ 0\ \#\ \ 0\ 0\ 0\ 0\ \#_{m} 0\ 0\ 0\ 0 \#_{s}\\
\end{array}$};
\draw (106+10,13.4+25) node [anchor=north west][inner sep=0.75pt]    {$ \begin{array}{l}
\#_{s} \ h\ t\ 0\ 0\ \#_{m} \ 0\ 0\ t\ h\ \#\ h\ t\ 0\ 0\  \#_{m} \ 0\ 0\ \ \ \ \ \ t\ h\ \#\ \  h\ t\ 0\ 0\ \#_{m} \ 0\ 0\ t\ h \#_{s}
\end{array}$};
\end{tikzpicture}
\caption{Example of the verification dynamics for a periodic pattern. In this case, the pattern is given by the second row of a block representing the edge $(1,2)$ in some graph.}
\label{fig:verdym}
\end{figure}
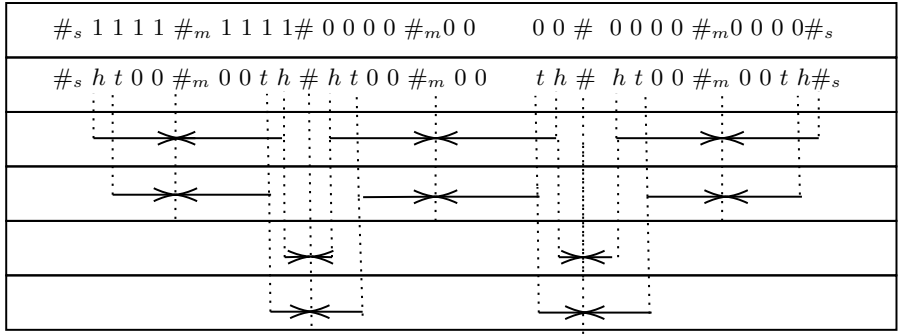

%\begin{lemma}
%For any string $x \in \Gamma^{*},$  $i \in [|y|]$ and $l \in \N.$ There exists a deterministc automata network $\mathcal{A}_x = (G_x, F_x)$  on some alphabet $Q_x$  and a configuration $s(x)$ such that $|s(x)| = |x|$ satisfying the following property: there exist $t \in \N$ and $q^* \in Q_x$ such that $F^t_x(s) = q^*$ if and only if $x$ is marker in $i$ of length $l$.
%\end{lemma}
%\begin{proof}
%\end{proof}
\begin{figure}
\centering

\tikzset{every picture/.style={line width=0.75pt}} %set default line width to 0.75pt        

\begin{tikzpicture}[x=0.75pt,y=0.75pt,yscale=-1,xscale=1]
%uncomment if require: \path (0,187); %set diagram left start at 0, and has height of 187

\draw  [fill={rgb, 255:red, 255; green, 255; blue, 255 }  ,fill opacity=1 ]    (106.76, 0) circle [x radius= 12, y radius= 12] ;
\draw (106.76+6,-5) node [anchor=north east][inner sep=0.75pt]    {A};

%Shape: Rectangle [id:dp04623511252872292] 
\draw   (101,25) -- (537.43,25) -- (537.43,52.63) -- (101,52.63) -- cycle ;
%Shape: Rectangle [id:dp9383458984595135] 
\draw   (101,52.63) -- (537.43,52.63) -- (537.43,80.25) -- (101,80.25) -- cycle ;
%Shape: Rectangle [id:dp9172252348460574] 
\draw   (101,80.25) -- (537.43,80.25) -- (537.43,107.88) -- (101,107.88) -- cycle ;
%Straight Lines [id:da7823254547966701] 
\draw    (112.7,65.93) -- (175.26,65.93) ;
%Straight Lines [id:da7925624931697113] 
\draw    (254.51,65.93) -- (311.59,65.93) ;
%Straight Lines [id:da07369768634442364] 
\draw    (175.26,65.93) -- (251.51,65.93) ;
\draw [shift={(254.51,65.93)}, rotate = 180] [fill={rgb, 255:red, 0; green, 0; blue, 0 }  ][line width=0.08]  [draw opacity=0] (8.93,-4.29) -- (0,0) -- (8.93,4.29) -- cycle    ;
%Straight Lines [id:da34794663452103713] 
\draw    (112.19,95.6) -- (175.16,95.6) ;
%Straight Lines [id:da2712023201327589] 
\draw  [dash pattern={on 0.84pt off 2.51pt}]  (112.7,134.99) -- (111.17,43.42) ;
%Straight Lines [id:da746271041359558] 
\draw    (255.51,95.6) -- (311.48,95.6) ;
%Straight Lines [id:da5898802719139659] 
\draw  [dash pattern={on 0.84pt off 2.51pt}]  (256.5,81.5) -- (255.55,43.42) ;
%Straight Lines [id:da8701930406401004] 
\draw    (175.16,95.6) -- (252.51,95.6) ;
\draw [shift={(255.51,95.6)}, rotate = 180] [fill={rgb, 255:red, 0; green, 0; blue, 0 }  ][line width=0.08]  [draw opacity=0] (8.93,-4.29) -- (0,0) -- (8.93,4.29) -- cycle    ;
%Straight Lines [id:da8122962231694646] 
\draw    (394.75,65.93) -- (451.47,65.93) ;
%Straight Lines [id:da11892144845368291] 
\draw    (311.59,65.93) -- (384.75,65.93) ;
\draw [shift={(387.75,65.93)}, rotate = 180] [fill={rgb, 255:red, 0; green, 0; blue, 0 }  ][line width=0.08]  [draw opacity=0] (8.93,-4.29) -- (0,0) -- (8.93,4.29) -- cycle    ;
%Straight Lines [id:da36775439318978753] 
\draw    (395.75,95.6) -- (451.47,95.6) ;
%Straight Lines [id:da7971188941376031] 
\draw    (311.48,95.6) -- (387.64,95.6) ;
\draw [shift={(390.64,95.6)}, rotate = 180] [fill={rgb, 255:red, 0; green, 0; blue, 0 }  ][line width=0.08]  [draw opacity=0] (8.93,-4.29) -- (0,0) -- (8.93,4.29) -- cycle    ;
%Straight Lines [id:da021628310847560672] 
\draw    (451.47,65.93) -- (521.38,65.93) ;
\draw [shift={(524.38,65.93)}, rotate = 180] [fill={rgb, 255:red, 0; green, 0; blue, 0 }  ][line width=0.08]  [draw opacity=0] (8.93,-4.29) -- (0,0) -- (8.93,4.29) -- cycle    ;
%Straight Lines [id:da28617473440551633] 
\draw    (451.47,95.6) -- (521.38,95.6) ;
\draw [shift={(524.38,95.6)}, rotate = 180] [fill={rgb, 255:red, 0; green, 0; blue, 0 }  ][line width=0.08]  [draw opacity=0] (8.93,-4.29) -- (0,0) -- (8.93,4.29) -- cycle    ;
%Shape: Rectangle [id:dp6456335705418488] 
\draw   (101,107.88) -- (537.43,107.88) -- (537.43,135.5) -- (101,135.5) -- cycle ;
%Straight Lines [id:da8891440887494052] 
\draw    (114.19,120.6) -- (257,120.5) ;
\draw [shift={(259,120.5)}, rotate = 539.96] [fill={rgb, 255:red, 0; green, 0; blue, 0 }  ][line width=0.08]  [draw opacity=0] (12,-3) -- (0,0) -- (12,3) -- cycle    ;
%Straight Lines [id:da033128062070144515] 
\draw    (259,120.5) -- (521.15,120.5) ;
\draw [shift={(523.5,120.5)}, rotate = 0] [color={rgb, 255:red, 0; green, 0; blue, 0 }  ][line width=0.75]      (0, 0) circle [x radius= 3.35, y radius= 3.35]   ;
%Straight Lines [id:da45026537706482694] 
\draw    (257.45,107.58) .. controls (255.72,105.98) and (255.66,104.32) .. (257.27,102.59) .. controls (258.88,100.86) and (258.82,99.2) .. (257.09,97.59) .. controls (255.36,95.98) and (255.3,94.32) .. (256.91,92.59) .. controls (258.51,90.86) and (258.45,89.2) .. (256.72,87.6) .. controls (254.99,85.99) and (254.93,84.33) .. (256.54,82.6) -- (256.5,81.5) -- (256.5,81.5) ;
%Straight Lines [id:da04392975089245632] 
\draw    (390.22,78.97) .. controls (388.49,77.36) and (388.43,75.7) .. (390.04,73.97) .. controls (391.65,72.24) and (391.59,70.58) .. (389.86,68.97) .. controls (388.13,67.37) and (388.07,65.71) .. (389.67,63.98) .. controls (391.28,62.25) and (391.22,60.59) .. (389.49,58.98) .. controls (387.76,57.37) and (387.7,55.71) .. (389.31,53.98) -- (389.27,52.88) -- (389.27,52.88) -- (389.67,52.88) -- (389.27-1,52.88-2-2) -- (389.67,52.88-2-2-2) -- (389.27-1,52.88-2-2-2-2)  ;
%Straight Lines [id:da5760168787126682] 
\draw  [dash pattern={on 0.84pt off 2.51pt}]  (390.5,106.5) -- (390.22,78.97) ;
%Straight Lines [id:da6915358699937589] 
\draw  [dash pattern={on 0.84pt off 2.51pt}]  (258,136.5) -- (257.45,107.58) ;
%Shape: Rectangle [id:dp8583015724457612] 
\draw  [dash pattern={on 3.75pt off 3pt on 7.5pt off 1.5pt}][line width=0.75]  (241.5,58.5) -- (271.5,58.5) -- (271.5,115.27) -- (241.5,115.27) -- cycle ;

% Text Node
\draw (106.76,28.68) node [anchor=north west][inner sep=0.75pt]    {$\#_{s} \ 1\ 1\ 1\ 1\ \#_{m} \ 1\ 1\ 1\ 1\  \  \ \ \ \#\ 0\ 0\ 0\ 0\ \#_{m} 0\ 0\ 0\ 0\  \ \ \ \ \#\ 0\ 0\ 0\ 0\ \#_{m} 0\ 0\ 0\ 0\ \ \ \#_{s}$};
% Text Node
\draw (517.51,146.4) node [anchor=north west][inner sep=0.75pt]    {$\text{OK} $};
%\draw (517.51,146.4+20) node [anchor=north west][inner sep=0.75pt]    {~};
\draw    (100,146.4+20) -- (517.51+20,146.4+20) ;
\draw (517.51,146.4+25) node [anchor=north west][inner sep=0.75pt]    {~};

\end{tikzpicture}

%Segunda Figura

\begin{tikzpicture}[x=0.75pt,y=0.75pt,yscale=-1,xscale=1]
%uncomment if require: \path (0,187); %set diagram left start at 0, and has height of 187
\draw  [fill={rgb, 255:red, 255; green, 255; blue, 255 }  ,fill opacity=1 ]    (106.76, 0) circle [x radius= 12, y radius= 12] ;
\draw (106.76+6,-5) node [anchor=north east][inner sep=0.75pt]    {B};
%Shape: Rectangle [id:dp7640640881811835] 
\draw   (101,25) -- (537.43,25) -- (537.43,52.63) -- (101,52.63) -- cycle ;
%Shape: Rectangle [id:dp7507507884069877] 
\draw   (101,52.63) -- (537.43,52.63) -- (537.43,80.25) -- (101,80.25) -- cycle ;
%Shape: Rectangle [id:dp3538321942849364] 
\draw   (101,80.25) -- (537.43,80.25) -- (537.43,107.88) -- (101,107.88) -- cycle ;
%Straight Lines [id:da5555357358646642] 
\draw    (112.7,65.93) -- (175.26,65.93) ;
%Straight Lines [id:da7417458514101084] 
\draw    (254.51,65.93) -- (311.59,65.93) ;
%Straight Lines [id:da6615407086527186] 
\draw    (175.26,65.93) -- (251.51,65.93) ;
\draw [shift={(254.51,65.93)}, rotate = 180] [fill={rgb, 255:red, 0; green, 0; blue, 0 }  ][line width=0.08]  [draw opacity=0] (8.93,-4.29) -- (0,0) -- (8.93,4.29) -- cycle    ;
%Straight Lines [id:da10498446156311803] 
\draw    (112.19,95.6) -- (175.16,95.6) ;
%Straight Lines [id:da25889902725028957] 
\draw  [dash pattern={on 0.84pt off 2.51pt}]  (112.7,134.99) -- (111.17,43.42) ;
%Straight Lines [id:da6462461476500915] 
\draw    (255.51,95.6) -- (311.48,95.6) ;
%Straight Lines [id:da9031759211271612] 
\draw    (256.5,81.5) .. controls (254.79,79.87) and (254.75,78.21) .. (256.37,76.5) .. controls (258,74.79) and (257.96,73.13) .. (256.25,71.5) .. controls (254.54,69.87) and (254.5,68.21) .. (256.12,66.5) .. controls (257.75,64.79) and (257.71,63.13) .. (256,61.51) .. controls (254.29,59.88) and (254.25,58.22) .. (255.87,56.51) .. controls (257.5,54.8) and (257.46,53.14) .. (255.75,51.51) .. controls (254.04,49.88) and (254,48.22) .. (255.62,46.51) -- (255.55,43.42) -- (255.55,43.42) ;
%Straight Lines [id:da9093336963037172] 
\draw    (175.16,95.6) -- (252.51,95.6) ;
\draw [shift={(255.51,95.6)}, rotate = 180] [fill={rgb, 255:red, 0; green, 0; blue, 0 }  ][line width=0.08]  [draw opacity=0] (8.93,-4.29) -- (0,0) -- (8.93,4.29) -- cycle    ;
%Straight Lines [id:da7975661246617392] 
\draw    (394.75,65.93) -- (451.47,65.93) ;
%Straight Lines [id:da940697140134817] 
\draw    (311.59,65.93) -- (384.75,65.93) ;
\draw [shift={(387.75,65.93)}, rotate = 180] [fill={rgb, 255:red, 0; green, 0; blue, 0 }  ][line width=0.08]  [draw opacity=0] (8.93,-4.29) -- (0,0) -- (8.93,4.29) -- cycle    ;
%Straight Lines [id:da8158130227193183] 
\draw    (395.75,95.6) -- (451.47,95.6) ;
%Straight Lines [id:da43753776700425184] 
\draw    (311.48,95.6) -- (387.64,95.6) ;
\draw [shift={(390.64,95.6)}, rotate = 180] [fill={rgb, 255:red, 0; green, 0; blue, 0 }  ][line width=0.08]  [draw opacity=0] (8.93,-4.29) -- (0,0) -- (8.93,4.29) -- cycle    ;
%Straight Lines [id:da2201281406463529] 
\draw    (451.47,65.93) -- (521.38,65.93) ;
\draw [shift={(524.38,65.93)}, rotate = 180] [fill={rgb, 255:red, 0; green, 0; blue, 0 }  ][line width=0.08]  [draw opacity=0] (8.93,-4.29) -- (0,0) -- (8.93,4.29) -- cycle    ;
%Straight Lines [id:da9652566955648837] 
\draw    (451.47,95.6) -- (521.38,95.6) ;
\draw [shift={(524.38,95.6)}, rotate = 180] [fill={rgb, 255:red, 0; green, 0; blue, 0 }  ][line width=0.08]  [draw opacity=0] (8.93,-4.29) -- (0,0) -- (8.93,4.29) -- cycle    ;
%Shape: Rectangle [id:dp014533252212413084] 
\draw   (101,107.88) -- (537.43,107.88) -- (537.43,135.5) -- (101,135.5) -- cycle ;
%Straight Lines [id:da8238722159718026] 
\draw    (114.19,120.6) -- (257,120.5) ;
\draw [shift={(259,120.5)}, rotate = 539.96] [fill={rgb, 255:red, 0; green, 0; blue, 0 }  ][line width=0.08]  [draw opacity=0] (12,-3) -- (0,0) -- (12,3) -- cycle    ;
%Straight Lines [id:da06294583342771676] 
\draw    (259,120.5) -- (523.5,120.5) ;
\draw [shift={(523.5,120.5)}, rotate = 0] [color={rgb, 255:red, 0; green, 0; blue, 0 }  ][fill={rgb, 255:red, 0; green, 0; blue, 0 }  ][line width=0.75]      (0, 0) circle [x radius= 3.35, y radius= 3.35]   ;
%Straight Lines [id:da29655919313104706] 
\draw    (257.45,107.58) .. controls (255.72,105.98) and (255.66,104.32) .. (257.27,102.59) .. controls (258.88,100.86) and (258.82,99.2) .. (257.09,97.59) .. controls (255.36,95.98) and (255.3,94.32) .. (256.91,92.59) .. controls (258.51,90.86) and (258.45,89.2) .. (256.72,87.6) .. controls (254.99,85.99) and (254.93,84.33) .. (256.54,82.6) -- (256.5,81.5) -- (256.5,81.5) ;
%Straight Lines [id:da7781759195460292] 
\draw    (390.22,78.97) .. controls (388.49,77.36) and (388.43,75.7) .. (390.04,73.97) .. controls (391.65,72.24) and (391.59,70.58) .. (389.86,68.97) .. controls (388.13,67.37) and (388.07,65.71) .. (389.67,63.98) .. controls (391.28,62.25) and (391.22,60.59) .. (389.49,58.98) .. controls (387.76,57.37) and (387.7,55.71) .. (389.31,53.98) -- (389.27,52.88) -- (389.27,52.88) ;
%Straight Lines [id:da1692620267919107] 
\draw  [dash pattern={on 0.84pt off 2.51pt}]  (390.5,106.5) -- (390.22,78.97) ;
%Straight Lines [id:da13574651200630006] 
\draw  [dash pattern={on 0.84pt off 2.51pt}]  (258,136.5) -- (257.45,107.58) ;
%Shape: Rectangle [id:dp5089775844546234] 
\draw  [dash pattern={on 3.75pt off 3pt on 7.5pt off 1.5pt}][line width=0.75]  (241.5,58.5) -- (271.5,58.5) -- (271.5,115.27) -- (241.5,115.27) -- cycle ;

% Text Node
\draw (106.76,28.68) node [anchor=north west][inner sep=0.75pt]    {$\#_{s} \ 1\ 1\ 1\ 1\ \#_{m} \ 1\ 0\ 0\ 1\  \ \ \ \ \#\ 0\ 0\ 0\ 0\ \#_{m} 0\ 0\ 0\ 0\  \ \ \ \  \#\ 0\ 0\ 0\ 0\ \#_{m} 0\ 0\ 0\ 0\ \ \ \#_{s}$};
% Text Node
\draw (250.51,139) node [anchor=north west][inner sep=0.75pt]    {$\times $};
%\draw (517.51,146.4+20) node [anchor=north west][inner sep=0.75pt]    {~};
\draw    (100,146.4+20) -- (517.51+20,146.4+20) ;
\draw (517.51,146.4+25) node [anchor=north west][inner sep=0.75pt]    {~};
\end{tikzpicture}

%Tercera figura
\tikzset{every picture/.style={line width=0.75pt}} %set default line width to 0.75pt        

\begin{tikzpicture}[x=0.75pt,y=0.75pt,yscale=-1,xscale=1]
%uncomment if require: \path (0,187); %set diagram left start at 0, and has height of 187
\draw   [fill={rgb, 255:red, 255; green, 255; blue, 255 }  ,fill opacity=1 ]   (106.76, 0) circle [x radius= 12, y radius= 12] ;
\draw (106.76+6,-5) node [anchor=north east][inner sep=0.75pt]    {C};
%Shape: Rectangle [id:dp5288915024798576] 
\draw   (101,25) -- (537.43,25) -- (537.43,52.63) -- (101,52.63) -- cycle ;
%Shape: Rectangle [id:dp5842739035721453] 
\draw   (101,52.63) -- (537.43,52.63) -- (537.43,80.25) -- (101,80.25) -- cycle ;
%Shape: Rectangle [id:dp4825705805443805] 
\draw   (101,80.25) -- (537.43,80.25) -- (537.43,107.88) -- (101,107.88) -- cycle ;
%Straight Lines [id:da14900752364011138] 
\draw    (112.7,65.93) -- (175.26,65.93) ;
%Straight Lines [id:da11118124914658589] 
\draw    (254.51,65.93) -- (311.59,65.93) ;
%Straight Lines [id:da7848649548884493] 
\draw    (175.26,65.93) -- (251.51,65.93) ;
\draw [shift={(254.51,65.93)}, rotate = 180] [fill={rgb, 255:red, 0; green, 0; blue, 0 }  ][line width=0.08]  [draw opacity=0] (8.93,-4.29) -- (0,0) -- (8.93,4.29) -- cycle    ;
%Straight Lines [id:da6900650634527248] 
\draw    (112.19,95.6) -- (175.16,95.6) ;
%Straight Lines [id:da3278485949995672] 
\draw  [dash pattern={on 0.84pt off 2.51pt}]  (112.7,134.99) -- (111.17,43.42) ;
%Straight Lines [id:da07402789682008015] 
\draw    (255.51,95.6) -- (311.48,95.6) ;
%Straight Lines [id:da5950462716756096] 
\draw    (256.5,81.5) .. controls (254.79,79.87) and (254.75,78.21) .. (256.37,76.5) .. controls (258,74.79) and (257.96,73.13) .. (256.25,71.5) .. controls (254.54,69.87) and (254.5,68.21) .. (256.12,66.5) .. controls (257.75,64.79) and (257.71,63.13) .. (256,61.51) .. controls (254.29,59.88) and (254.25,58.22) .. (255.87,56.51) .. controls (257.5,54.8) and (257.46,53.14) .. (255.75,51.51) .. controls (254.04,49.88) and (254,48.22) .. (255.62,46.51) -- (255.55,43.42) -- (255.55,43.42) ;
%Straight Lines [id:da5027126823493673] 
\draw    (175.16,95.6) -- (252.51,95.6) ;
\draw [shift={(255.51,95.6)}, rotate = 180] [fill={rgb, 255:red, 0; green, 0; blue, 0 }  ][line width=0.08]  [draw opacity=0] (8.93,-4.29) -- (0,0) -- (8.93,4.29) -- cycle    ;
%Straight Lines [id:da9864548999704302] 
\draw    (394.75,65.93) -- (451.47,65.93) ;
%Straight Lines [id:da8505597907348456] 
\draw    (311.59,65.93) -- (384.75,65.93) ;
\draw [shift={(387.75,65.93)}, rotate = 180] [fill={rgb, 255:red, 0; green, 0; blue, 0 }  ][line width=0.08]  [draw opacity=0] (8.93,-4.29) -- (0,0) -- (8.93,4.29) -- cycle    ;
%Straight Lines [id:da37642760895607374] 
\draw    (395.75,95.6) -- (451.47,95.6) ;
%Straight Lines [id:da05862540370039604] 
\draw    (311.48,95.6) -- (387.64,95.6) ;
\draw [shift={(390.64,95.6)}, rotate = 180] [fill={rgb, 255:red, 0; green, 0; blue, 0 }  ][line width=0.08]  [draw opacity=0] (8.93,-4.29) -- (0,0) -- (8.93,4.29) -- cycle    ;
%Straight Lines [id:da3896778035742635] 
\draw    (451.47,65.93) -- (521.38,65.93) ;
\draw [shift={(524.38,65.93)}, rotate = 180] [fill={rgb, 255:red, 0; green, 0; blue, 0 }  ][line width=0.08]  [draw opacity=0] (8.93,-4.29) -- (0,0) -- (8.93,4.29) -- cycle    ;
%Straight Lines [id:da7859982241415587] 
\draw    (451.47,95.6) -- (521.38,95.6) ;
\draw [shift={(524.38,95.6)}, rotate = 180] [fill={rgb, 255:red, 0; green, 0; blue, 0 }  ][line width=0.08]  [draw opacity=0] (8.93,-4.29) -- (0,0) -- (8.93,4.29) -- cycle    ;
%Shape: Rectangle [id:dp609250715065034] 
\draw   (101,107.88) -- (537.43,107.88) -- (537.43,135.5) -- (101,135.5) -- cycle ;
%Straight Lines [id:da80229707693847] 
\draw    (114.19,120.6) -- (257,120.5) ;
\draw [shift={(259,120.5)}, rotate = 539.96] [fill={rgb, 255:red, 0; green, 0; blue, 0 }  ][line width=0.08]  [draw opacity=0] (12,-3) -- (0,0) -- (12,3) -- cycle    ;
%Straight Lines [id:da3615295824908671] 
\draw    (259,120.5) -- (523.5,120.5) ;
\draw [shift={(523.5,120.5)}, rotate = 0] [color={rgb, 255:red, 0; green, 0; blue, 0 }  ][fill={rgb, 255:red, 0; green, 0; blue, 0 }  ][line width=0.75]      (0, 0) circle [x radius= 3.35, y radius= 3.35]   ;
%Straight Lines [id:da9857685821401205] 
\draw  [dash pattern={on 0.84pt off 2.51pt}]  (257.45,107.58) -- (256.5,81.5) ;
%Straight Lines [id:da05879765319233443] 
\draw    (390.22,78.97) .. controls (388.49,77.36) and (388.43,75.7) .. (390.04,73.97) .. controls (391.65,72.24) and (391.59,70.58) .. (389.86,68.97) .. controls (388.13,67.37) and (388.07,65.71) .. (389.67,63.98) .. controls (391.28,62.25) and (391.22,60.59) .. (389.49,58.98) .. controls (387.76,57.37) and (387.7,55.71) .. (389.31,53.98) -- (389.27,52.88) -- (389.27,52.88) ;
%Straight Lines [id:da8720716915648266] 
\draw  [dash pattern={on 0.84pt off 2.51pt}]  (390.5,106.5) -- (390.22,78.97) ;
%Straight Lines [id:da3653246021775134] 
\draw  [dash pattern={on 0.84pt off 2.51pt}]  (258,136.5) -- (257.45,107.58) ;

% Text Node
\draw (106.76,28.68) node [anchor=north west][inner sep=0.75pt]    {$\#_{s} \ 0\ 0\ 0\ 0\ \#_{m} \ 0\ 0\ 0\ 0\  \  \  \ \ \#\ 0\ 0\ 0\ 0\ \#_{m} 0\ 0\ 0\ 0\  \ \ \ \ \ \#\ 0\ 0\ 0\ 0\ \#_{m} 0\ 0\ 0\ 0\ \ \ \ \#_{s}$};
% Text Node
\draw (517.51,140.4) node [anchor=north west][inner sep=0.75pt]    {$\times $};
\end{tikzpicture}
\caption{Example of the verification dynamics for a marker. (Upper panel) A successful verification of a marker. If exactly one zone has only cells in state $1$ an acceptance state will be reached. (Middle panel) An error in the verification raised by a zone in which cells in state $0$ and $1$ were identified.  In this case, an error state is propagated. (Lower panel) An error in the verification raised by the signal only reading cells in state $0$.  In this case, an error state is propagated. }
\label{fig:secondrowver}
\end{figure}
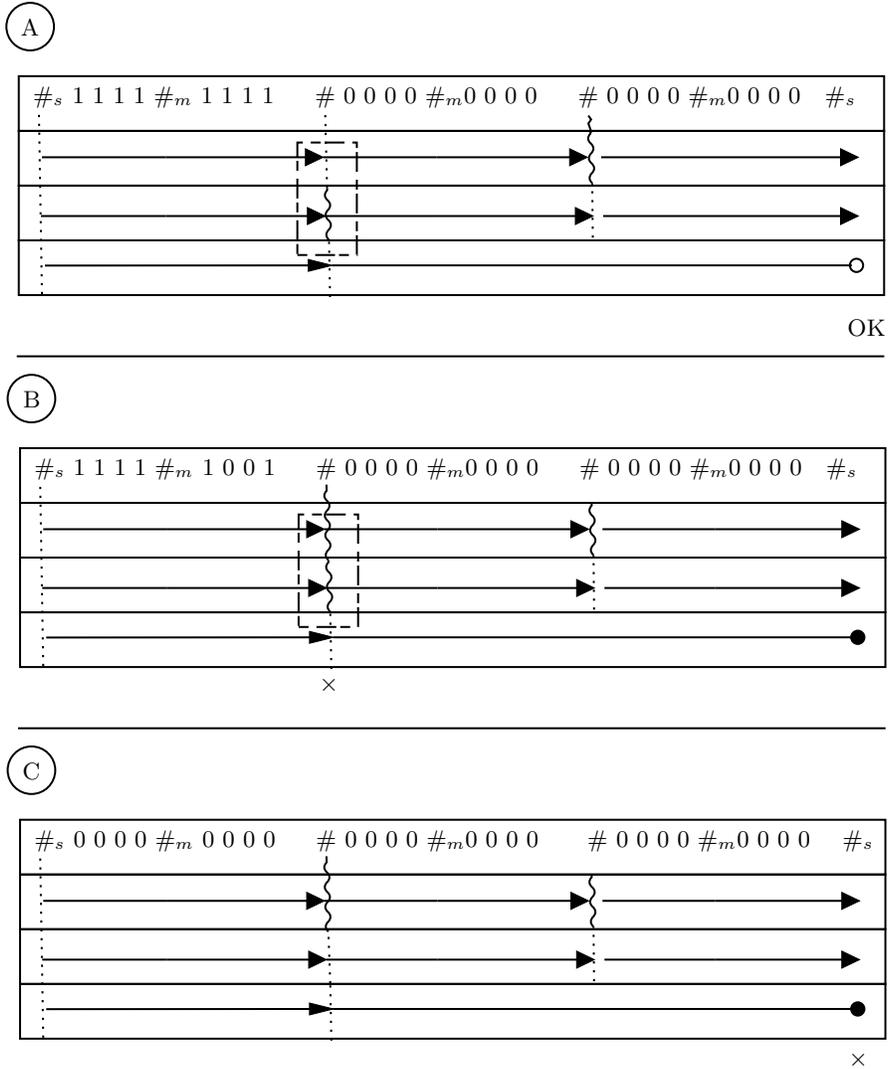
In order to give the detailed construction behind Theorem~\ref{theo:nlhard}, we need some technical definitions. Let $\Gamma$ be a finite set and $\alpha \in \Gamma$. We call a string $y \in \Gamma^{*}$ an $\alpha$-\emph{marker} in $i \in [|y|]$ of length $l \in \N$  if $y|_{[i, i+l]} = \alpha \cdots \alpha.$

Given a directed graph $D=(V,E)$ with $n$ nodes $V=\{v_1,\ldots, v_n\}$ and $m$ edges $E=\{e_1,\ldots, e_m\}$, we consider its (oriented) incidence matrix $M$ defined by 
\[M(i,k) =
  \begin{cases}
    h&\text{ if $e_k=(v_i,\cdot)$ (node $v_i$ is the \emph{head} of edge $e_k$)},\\
    t&\text{ if $e_k=(\cdot,v_i)$ (node $v_i$ is the \emph{tail} of edge $e_k$)},\\
    0&\text{ else.}
  \end{cases}\]
For each $e_k\in E$ (with $1\leq k\leq m$) we define a block representing $e_k$ as a $2 \times (2mn+2m+1)$ matrix $B(e_k)$ such that:

\begin{enumerate}
\item $B(e_k)$ has $2m+1$ special symbols located at specific positions. More precisely $B(e_k)_{i,(n+1)(j-1) + 1} \in  \{\#_s, \#_m, \#\}$ for $i = 1,2,3$ and $j = 1, \hdots, 2m.$
\item  Its first row $B(e_k)_{1,\cdot} \in \{0,1\}^{(2n+2)m+1}$ is a $1$-marker at position $(2n+2)(k-1) + 2$ of length $n$ and a $1-$marker in  $(2n+2)(k-1) + n + 3$ of length $n$; and
\item $B(e_k)_{2,\cdot}\in \{0,1\}^{(2n+2)m+1}$  is a periodic repetition of the row of $M$ corresponding to $e_k$ followed by the same row in reverse order. More precisely:
  \begin{align*}
    B(e_k)_{2,[(2n+2)(i-1)+2, (2n+2)(i-1)+n+1]} &= M_{\cdot,k} \\
    B(e_k)_{2,[(2n+2)(i-1)+n+3, (2n+2)(i-1)+2n+2]} &=  \sigma(M_{\cdot,e_k})
  \end{align*}
for ${1\leq i\leq m}$, and where $\sigma$ is mirror permutation on words, i.e. such that $\sigma(u_1\cdots u_p) = u_p\cdots u_1$.
\end{enumerate}
For an example of a block for some graph $D$ see Figure \ref{fig:blocktext}.

First, observe that  \textsf{REGSPEC} is in \textbf{NL} by Theorem \ref{teo:NL}. Now for the \textbf{NL}-hardness, let us take the problem \textsf{STCON} consisting in given a digraph $D = (N,A)$ and two nodes $s,t \in V$ deciding whether there exists a path connecting $s$ and $t$. Let $(D=(N,A), s, t)$ be an instance of \textsf{STCON}. Observe that any path between $s$ and $t$ can be seen as a sequence of edges $e_{k_1} , \hdots, e_{k_\ell}$ such that $e_{k_1} = (s,v)$, $e_{k_i} = (u',v') \implies e_{k_{i+1}} = (v',w')$ for some $u',v',w' \in N$, $i \in \{2, \hdots \ell -1\}$ and $e_{k_\ell} = (w,t)$ for some $v,w \in N$. Besides, since each edge can be represented by a block then, an $(s,t)$-path $P$ can be represented as a sequences of blocks $B(e_{k_1}), \hdots, B(e_{k_\ell})$. Now, the main idea of the proof is to construct a non-deterministic automata network $\mathcal{A}_{D} = (G_D, \mathcal{F}_D)$ defined over a two dimensional grid of size $k \times d$ where $d = n^{\cO(1)}$ and $k = \cO(1)$. Of course, since $k$ is constant, then $\mathcal{A}_{D}$ has bounded pathwidth. This automata network, will non-deterministically guess a sequences of blocks representing edges in $A$ (selection phase). Needless to say that, the first part of the proof will be showing that $\mathcal{A}_{D} = (G_D, \mathcal{F}_D)$ is capable of deterministically verify if an initial condition corresponds to a sequence of valid edges, i.e.  if it corresponds to a sequence of blocks and they actually represent edges in $A$ (verification phase). In order to perform these task,  we use a construction based on using signals that will collide at specific locations as a way to verify the distance between two given cells, and which use the saved information (as a constant layer) about the incidence matrix of $D$. Generally speaking, once $\mathcal{A}_D$ has verified that the sequence of blocks is valid, it will compare two subsequent blocks $B_{i}$ and $B_{i+1}$ in order to verify that if $B_{i}$ represents edge $(u,v)$ then $B_{i+1}$ represents edge $(v,w)$ for some $u,v,w \in N$. If in any part of its dynamics $\mathcal{A}_D$ detects some error, it will spread an error state that will lead the system to an attractor corresponding to a uniform configuration in which any cell will be in this particular error state. However, if the process runs flawless, then the system will reach an attractor in which all the cells are in a particular success state. We will code, by using a specification $\mathcal{E}_D$ (given in the input of \textsf{SPEC}), a specific requirement for the initial configuration (more precisely, we will ask the initial configuration to have the incidence matrix of $D$, markers and information about the nodes $(s,t)$) in order to allow $\mathcal{A}_D$ to have enough information to start the selection and verification process. In addition, we will code in this specification only the orbits that will reach this specific success state. By doing this, we will show that $\mathcal{A}_D \models \mathcal{E}_D$ if and only if there is a path between $s,t$ in $D$. Thus, the reduction will consist on constructing $(\mathcal{A}_D, \mathcal{E}_D)$ from $(D,s,t)$  in $\DLOG.$ 

Now, we give details on the construction of $\mathcal{E}_D$:
\begin{enumerate}
\item $Q_D = Q_{P}\cup Q_{\text{core}} \cup Q_{\text{signal}}$ where $Q_{\text{core}} = \{\text{Success}, \text{Accept}, \text{Error}, 1, 0, \#_{s},\#_{m},\#\}$, $Q_P = \{P_1, \hdots, P_4\}$ are the states which indicate the different phases that are specified in the paragraph bellow and $Q_{\text{signal}}$ are the states used in order to propagate signals (for example, the ones that we have used them on the previous lemmas).
\item Since it is sufficient to code path without edge repetition, $d$ will be of size at most $m\times b$ where $b = (2n+2)m+1$ is the size of a block. Observe that this size can be fixed since we can always assume that there is a loop in the terminal node $t$ so we can consider that all the paths are coded by $m$ blocks with possible padding of blocks $B((t,t))$. Thus, $d = n^{\cO(1)}.$
\item $k$ will be the number of rows of the grid. We will essentially use one row for the incidence matrix (incidence row), two rows for the blocks (selection row) and a constant number of rows that we will call working rows in which the signals will move and collide (working and verification rows).
\item $\mathcal{E}_D$ will code only initial conditions in which one row of the grid (incidence row) will have $\cO(m)$ copies of the incidence matrix in the same format than the second row in blocks i.e. there are markers at specific positions and we code the different columns in the zones defined by the markers (Figure \ref{fig:veradj}).
\item $\mathcal{E}_D$ will code orbits in which the selection row of size $d$ will have marked in the first block a symbol indicating "head" in the position associated to node $s$ (see Figure \ref{fig:blocktext}).
\item $\mathcal{E}_D$ will code orbits in which the selection row of size $d$  will have marked in the last block a symbol indicating "tail" in the position associated to node $t$ (see Figure \ref{fig:blocktext}).
\item $\mathcal{E}_D$ will code orbits which will reach a uniform success state.
\end{enumerate}

Now we will describe the dynamics of the automata network $\mathcal{A}_D$.  
\paragraph*{Initialization} In $t=0$, since by construction of the specification $\mathcal{E}_D$, we can consider only orbits in which the incidence row, all the special symbols and the position of the source and terminal node (as an $h$ and $t$ symbol fixed in its correspondent positions) are well coded and fixed in the initial condition.  
\paragraph*{Selection phase}
First, the local rules will non-deterministically guess the states of the rest of the cells in the selection row. This process is performed cell by cell, by sending a traveling signal in one of the working rows. This signal starts on a starting symbol $\#_{\text{s}}$ and finishes in a terminal symbol $\#_{\text{s}}$. After doing that, the signal comes back from the terminal symbol to the starting symbol and writes a change of phase state in all the cells on the working row. 
\paragraph*{Verification phase}
After that, verification phase starts. The process has two main subphases:

 \textbf{A local phase:} First, each block is internally verified. More precisely, the local rules will verify that $B(e)$ has the correct formatting on its two rows and that it correspond to an actual edge in $D$, i.e. $e \in A$:
 
\begin{enumerate}
\item \emph{Verification of the first row.} For each part of size $2n$  defined by two different special symbols (i.e. the space bounded by pairs $(\#_s,\#), (\#, \#_s)$ or $(\#,\#)$)), three different signals will start from one symbol to the one in its left (see Figure \ref{fig:secondrowver}). The first signal will change of state if and only if it reads a cell in state $0$. If it remains in initial state it will be interpreted as success otherwise, if it has changed, then it will be interpreted as error. The second one will do the same thing but for cells in state $0$. Finally, the third signal will start from a cell marked with $\#_s$  and will go through the row until another $\#_s$ symbol is reached. This signal will verify that there is exactly one block which has marked success for the first signal and error for the second one. Otherwise, it will change to an error state that will be spread to all the cells (see Figure \ref{fig:secondrowver} for examples).

\item \emph{Verification of the second row.} In order to verify that the coding of the second row is correct, we need to check that each row has exactly two symbols: $h$ and $t$ and that the configuration is symmetric related to the cells marked with symbols $\#_m$. In order to do that, from each symbol $h$ and $t$ a signal is sent through two working rows (one signal to the right and one to the left, see Figure \ref{fig:verdym} for details.) Then, the local rules in the cells holding the state $\#_m$ will change to the success if exactly two signals arrive at the same time. More precisely, this last procedure is implemented by sending a two state signal, one marking the starting part of the signal and one marking the rest. If the latter condition does not hold, the cells marked by $\#_m$  will spread an error state (see Figure \ref{fig:verdym}). Observe that this procedure works since: i) if two cells are holding the same state and they are at the same distance of the cell marked by $\#_m$ then, the two equal signals will arrive at the same time to the cell holding the sate $\#_m$; and ii) since the coding considers a constant amount of special symbols (more precisely $h$ and $t$) then, the local rule is freezing.

\item \emph{Verification of the edge that is coded in the block.} At this point, if no error state has been produced by the dynamics, it means that the coding of each block is coherent, but we are not sure that it actually represents an edge $e \in A$. In fact, we have coded in the first selection row some number $i$ referencing a column of the adjacency matrix of $D$ but, we need to check whether the second selection row contains the same information than the $i$-column of the adjacency matrix. This last part is performed in the following way:  a signal will be transmitted over a working row in order to identify the information in the two selection rows of the block. Since each block has a marker in its first row, the signal can hold a state while it is in the same position than the cells in state one in the marker. Thus, this state will indicate the local rule to perform a comparison between the second row of the block and the correspondent part of the incidence row. For more details see Figure \ref{fig:veradj}. While verifications are being run, the local rule will write an acceptance state or an error state in some working row. Finally, a third signal will verify that all the cells in the latter working row are in the acceptance state and will spread the error state if not. Finally,  if no error state has been spread, the local rule updates the state of the cells in the working row holding the change of phase state.

\begin{figure}
\centering

\tikzset{every picture/.style={line width=0.75pt}} %set default line width to 0.75pt        

\begin{tikzpicture}[x=0.75pt,y=0.75pt,yscale=-1,xscale=1]
%uncomment if require: \path (0,139); %set diagram left start at 0, and has height of 139

%Shape: Rectangle [id:dp5353757958858549] 
\draw   (86,48) -- (521,48) -- (521,69) -- (86,69) -- cycle ;
%Shape: Rectangle [id:dp25821158581815573] 
\draw   (85.5,90) -- (521,90) -- (521,117-2) -- (85.5,117-2) -- cycle ;
%Shape: Rectangle [id:dp03267947620962697] 
\draw   (86,21) -- (521.5,21) -- (521.5,48) -- (86,48) -- cycle ;
%Shape: Rectangle [id:dp9206692312370022] 
\draw   (86,69) -- (521,69) -- (521,90) -- (86,90) -- cycle ;
%Straight Lines [id:da7183009054286432] 
\draw  [dash pattern={on 0.84pt off 2.51pt}]  (93+5,88) -- (93+5,116) ;
%Straight Lines [id:da07916813468555073] 
\draw    (97+3,103.5) -- (231-20,103.5) ;
\draw [shift={(234-15,103.5)}, rotate = 180] [fill={rgb, 255:red, 0; green, 0; blue, 0 }  ][line width=0.08]  [draw opacity=0] (8.93,-4.29) -- (0,0) -- (8.93,4.29) -- cycle    ;
%Straight Lines [id:da5054249120793456] 
\draw  [dash pattern={on 0.84pt off 2.51pt}]  (234-15,90) -- (234-15,117) ;
%Shape: Rectangle [id:dp003572617675448919] 
\draw   (105+6,19) -- (120+8,19) -- (120+8,92) -- (105+6,92) -- cycle ;
%Shape: Rectangle [id:dp02514615651718477] 
\draw   (213,19) -- (228,19) -- (228,92) -- (213,92) -- cycle ;

% Text Node
\draw (88,24.4) node [anchor=north west][inner sep=0.75pt]    {$ \begin{array}{l}
\#_{s} \ h\ t\ 0\ 0\ \#_{m} \ 0\ 0\ t\ h\  \#\ \ 0\ h\ t\ 0\ \#_{m} \ 0\ t\ h\ 0\ \#\ 0\ 0\ h\ t\ \#_{m} \ t\ h\ 0\ 0\ \#\\
\end{array}$};
\draw (88,24.4+25) node [anchor=north west][inner sep=0.75pt]    {$ \begin{array}{l}
\#_{s} \ 1\ 1\ 1\ 1\ \#_{m} \ 1\ 1\ 1\ 1\  \#\ 0\ 0\ 0\ 0\ \#_{m} \ 0\ 0\ 0\ 0\ \#\ 0\ 0\ 0\ 0 \ \  \#_{m} \  0\ 0\ 0\ 0\ \#_{s}
\end{array}$};
\draw (88,24.4+45) node [anchor=north west][inner sep=0.75pt]    {$ \begin{array}{l}
\#_{s} \ h\ t\ 0\ 0\ \#_{m} \ 0\ 0\ t\ h\ \#\ h\ t\ 0\ 0\ \#_{m} \ 0\ 0\ t\ h\ \#\ h\  t\ 0\ 0\ \#_{m} \ 0\ 0\ t\ h\ \#_{s}
\end{array}$};

\end{tikzpicture}
\caption{Example of adjacency verification. In the first row, the incidence matrix of $D$ is coded. In this case a signal verifies that the edge $(1,2)$ is in the graph $D$ (see Figure \ref{fig:blocktext})}
\label{fig:veradj}
\end{figure}
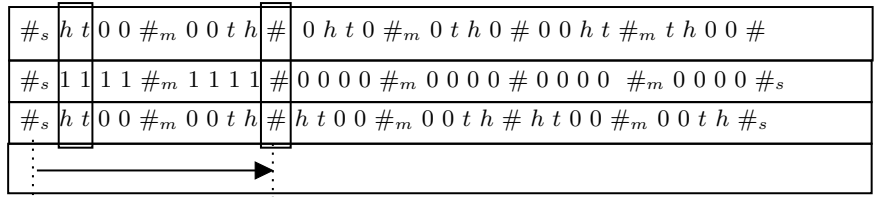

\end{enumerate}

 \textbf{A pair-wise coherent phase:}  
 similarly to the verification of the second row in the selection row on the previous phase, this phase starts by sending multiple signals that are sent from the cells in the selection row with states given by the symbols marking the tails and the heads of the coded edge in the second row of the selection row on each block. These signals are sent through two different working tapes. Each of these signals will carry a special state indicating if its origin was a head or a tail. The local rule in the cells with a special symbol ($\#_s$) will verify whether a head signal has collided with a tail signal (see Figure \ref{fig:verblocks}). If exactly one of this collision take place, the local rule will write an accept state in one of the working rows. Finally, in other working  row, a signal starting from the starting symbol will verify that at the position of the beginning (ending) of a block an accept state is written in the previous working row. The local rule will update  the cells in that working row to an error state that will spread if at least one the verifications is not correct. Otherwise, it will update the cells to the success state.

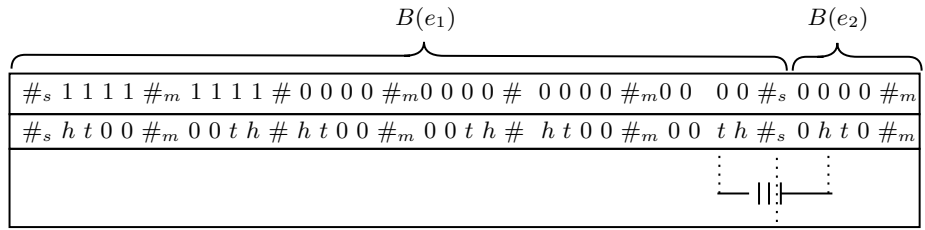
\begin{figure}

\tikzset{every picture/.style={line width=0.75pt}} %set default line width to 0.75pt        

\begin{tikzpicture}[x=0.75pt,y=0.75pt,yscale=-1,xscale=1]
%uncomment if require: \path (0,154); %set diagram left start at 0, and has height of 154

%Shape: Rectangle [id:dp48332505078550203] 
\draw   (66,58.47+5) -- (578-53,58.47+5) -- (578-53,85.96-2) -- (66,85.96-2) -- cycle ;
%Shape: Rectangle [id:dp43661764562564076] 
\draw   (66,85.96-2) -- (578-53,85.96-2) -- (578-53,113.46-12) -- (66,113.46-12) -- cycle ;
%Shape: Rectangle [id:dp3176389298834462] 
\draw   (66,113.46-12) -- (578-53,113.46-12) -- (578-53,140.95) -- (66,140.95) -- cycle ;
%Straight Lines [id:da08879236717581407] 
\draw  [dash pattern={on 0.84pt off 2.51pt}]  (488-35,106) -- (488-35,140) ;
%Straight Lines [id:da3698734197791582] 
\draw  [dash pattern={on 0.84pt off 2.51pt}]  (462-38,105) -- (462-38,124) ;
%Straight Lines [id:da6695234691956591] 
\draw  [dash pattern={on 0.84pt off 2.51pt}]  (519-40,105) -- (519-40,124) ;
%Straight Lines [id:da42039529617444205] 
\draw    (461-38,124) -- (477-38,124) ;
\draw [shift={(477-33,124)}, rotate = 180] [color={rgb, 255:red, 0; green, 0; blue, 0 }  ][line width=0.75]    (0,5.59) -- (0,-5.59)(-5.03,5.59) -- (-5.03,-5.59)   ;
%Straight Lines [id:da6677258728514688] 
\draw    (493-38,124) -- (519-38,124) ;
\draw [shift={(493-38,124)}, rotate = 180] [color={rgb, 255:red, 0; green, 0; blue, 0 }  ][line width=0.75]    (0,5.59) -- (0,-5.59)   ;
%Shape: Brace [id:dp5306886424963451] 
\draw   (483-25,52+10) .. controls (483.01-25,47.33+10) and (480.69-25,44.99+10) .. (476.02-25,44.98+10) -- (285.02,44.52+10) .. controls (278.35,44.5+10) and (275.02,42.16+10) .. (275.03,37.49+10) .. controls (275.02,42.16+10) and (271.69,44.48+10) .. (265.02,44.47+10)(268.02,44.47+10) -- (74.02,44+10) .. controls (69.35,43.99+10) and (67.01,46.31+10) .. (67,50.98+10) ;
%Shape: Brace [id:dp4353158505451846] 
\draw   (571-45,52+10) .. controls (571-45,47.33+10) and (568.67-45,45+10) .. (564-45,45+10) -- (538.5-45,45+10) .. controls (531.83-45,45+10) and (528.5-45,42.67+10) .. (528.5-45,38+10) .. controls (528.5-45,42.67+10) and (525.17-45,45+10) .. (518.5-45,45+10)(521.5-45,45+10) -- (493-25,45+10) .. controls (488.33-25,45+10) and (486-25,47.33+10) .. (486-25,52+10) ;

% Text Node
\draw (69,64.4) node [anchor=north west][inner sep=0.75pt]    {$ \begin{array}{l}
\#_{s} \ 1\ 1\ 1\ 1\ \#_{m} \ 1\ 1\ 1\ 1\ \#\ 0\ 0\ 0\ 0\ \#_{m} 0\ 0\ 0\ 0\ \#\ \ 0\ 0\ 0\ 0\ \#_{m} 0\ 0\ \ \ 0\ 0 \ \#_{s}\\
\end{array}$};
\draw (69,64.4+20) node [anchor=north west][inner sep=0.75pt]    {$ \begin{array}{l}
\#_{s} \ h\ t\ 0\ 0\ \#_{m} \ 0\ 0\ t\ h\ \#\ h\ t\ 0\ 0\  \#_{m} \ 0\ 0\ t\ h\ \#\ \ h\  t\ 0\ 0\ \#_{m} \ 0\ 0\ \  t\ h \ \#_{s}
\end{array}$};
% Text Node
\draw (501-45,64.4) node [anchor=north west][inner sep=0.75pt]    {$ \begin{array}{l}
\ 0\ 0\ 0\ 0\ \#_{m} \ \\
\end{array}$};
\draw (501-45,64.4+20) node [anchor=north west][inner sep=0.75pt]    {$ \begin{array}{l}
\ 0\ h\ t\ 0\ \#_{m}
\end{array}$};
% Text Node
\draw (259,13.4+15) node [anchor=north west][inner sep=0.75pt]    {$B( e_{1})$};
% Text Node
\draw (512-45,13.4+15) node [anchor=north west][inner sep=0.75pt]    {$B( e_{2})$};
\end{tikzpicture}
\caption{Example of a verification dynamics which compares blocks and checks if the corresponding edges are both incident to the same node. If exactly one tail signal  collide with a head it means that the previous property is verified and an error state is spread otherwise.}
\label{fig:verblocks}
\end{figure}
 
We can now show that Theorem~\ref{theo:nlhard} holds.

\begin{proof}[Proof of Theorem~\ref{theo:nlhard}]
First, observe that the construction of $(\mathcal{A}_D, \mathcal{E}_D)$ can be done in $\DLOG$ since local rules does not depend on the structure of $D$ and thus, we only need to store partial information related to the structure of the incidence matrix of $D$ in order to define the specification. Then, we have that if there is a path between $s$ and $t$ on $D$, by construction, there must be at least one orbit of $\mathcal{A}_D$ which satisfies $\mathcal{E}_D$. Conversely, if $\mathcal{A}_D \models \mathcal{E}_D$ then, there exist at least one initial condition which codes a sequence of edges in $D$ which leads the system to a uniform success fixed point. By construction, this attractor is only reachable (starting from the set of valid initial conditions) after all the previous phases are successfully performed by the dynamics. Then, we deduce that $\textsf{STCON} \leq^{\DLOG}_{m} \textsf{SPEC}$ and thus,  \textsf{SPEC} is $\textbf{NL}$-hard. \qed
\end{proof}

%\TODO{Maybe it is not enough with $\#_s, \#$ and $\#_m$ and we need to add a special symbol to distinguish the first and the last blocks}

%\TODO{Maybe the bla bla is too long, or is it ok?}

%\TODO{Read the proof}
\section{Hardness of $FO^+$ model checking}

${FO(=,\to)}$ denotes the first order logic over configurations using equality and a predicate ${x\to y}$ meaning that $y$ can be reached from $x$ in one step. It is well-known that this logic can be efficiently dealt with using finite automata theory when configurations are one-dimensional. For instance the model-checking of this logic is decidable on one-dimensional CAs \cite{Finkel11,Sutner09}. 
In this subsection, we study first-order properties of the dynamics enriched with a new predicate $x \to^+ y$ expressing that configuration $y$ can be reached from configuration $x$ in some unknown number of steps. We denote this logic $FO^+=FO(=,\to,\to^+)$. Adding the predicate $\to^+$ allows to express properties like nilpotency:
\[\exists x, \forall z, (z\to^+ z)\implies z=x,\]
which is equivalent in the deterministic case to ${\exists x,\forall y, y\to^+x}$.
The model checking of $FO^+$ is therefore undecidable for general 1D CA \cite{kari92} and \textbf{PSPACE-complete} for AN of bounded pathwidth \cite{stacs.2021.32}. However, nilpotency is a decidable property for 1D freezing CA \cite{dmtcs:9004} and \textbf{co-NL} for bounded pathwidth freezing AN (Corollary~\ref{cor:nilconl}). It is therefore interesting to figure out what is the complexity of the model checking of $FO^+$ for freezing AN of bounded pathwidth.

The goal of this section is to show that despite considering only ``one-dimensional'' networks and having the  freezing constraint, we can encode bi-dimensional domino problems in $FO^+$ and thus get a \textbf{NP-hard} lower bound. The precise NP-hard problem we consider in this subsection to reduce from is the following.

\begin{lemma}[HV-domino CSP]\label{lem:hvdomino}
  Let $Q$ be a large enough alphabet. The following problem is NP-complete:
  \begin{itemize}
  \item \textbf{input:} for each ${1\leq i,j\leq n}$, two lists of constraints: ${H_{i,j}\subseteq Q^2}$ and ${V_{i,j}\subseteq Q^2}$.
  \item \textbf{question:} does there exist a configuration ${a\in Q^{\{1,\ldots,n\}\times\{1,\ldots,n\}}}$ such that for all ${1\leq i,j\leq n}$ the local constraints are satisfied, \textit{i.e.} 
    \[(a_{i,j},a_{i+1,j})\in H_{i,j}\text{ (if $i<n$) and }(a_{i,j},a_{i,j+1})\in V_{i,j}\text{ (if $j<n$)}.\]
  \end{itemize}
\end{lemma}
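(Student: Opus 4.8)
The plan is to treat the two directions of the NP-completeness claim separately. Membership in NP is immediate: a configuration $a\in Q^{\{1,\ldots,n\}\times\{1,\ldots,n\}}$ is a witness of size $O(n^2)$ (the alphabet is a fixed constant), and checking each of the $O(n^2)$ local constraints against the lists $H_{i,j}$ and $V_{i,j}$ supplied in the input takes polynomial time. So the whole effort goes into NP-hardness, which I would establish by a polynomial-time many-one reduction from 3-SAT.

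Given a formula with variables $x_1,\ldots,x_p$ and clauses $C_1,\ldots,C_q$, I would set $n=\max(p,q)$ (padding $n$ up to a small constant in degenerate cases) and use the grid so that \emph{column $i$ carries the truth value of $x_i$} and \emph{row $j$ checks clause $C_j$}. Concretely I would take $Q=\{0,1\}\times\{0,1\}$, where the first component of $a_{i,j}$ is the boolean value assigned to $x_i$ and the second component is an accumulator recording whether $C_j$ has already been satisfied by the variables in columns $1,\ldots,i$. The vertical constraints $V_{i,j}$ would force the first component to be copied unchanged down each column (so each column holds a single globally consistent value, i.e.\ a genuine assignment) while imposing nothing on the accumulator, which therefore resets independently in each row.

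The real content sits in the horizontal constraints, which implement a left-to-right sweep computing, within row $j$, whether $C_j$ is satisfied. Writing $\ell(b,k,j)=1$ exactly when the literal of $x_k$ occurring in $C_j$ (if any) is made true by the value $b$, I would let $H_{i,j}$ be the binary relation allowing a pair of adjacent cells precisely when $s_{i+1,j}=s_{i,j}\vee \ell(b_{i+1,j},i+1,j)$, leaving both boolean components free. Position-dependence of the constraints is essential here, and it also lets me encode the needed unary conditions by restricting a single incident relation of a cell: at the left end I initialise the sweep by forcing $s_{1,j}=\ell(b_{1,j},1,j)$, and at the end of the active columns I force the accumulator to equal $1$ (the clause must end up satisfied). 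Columns beyond $p$ and rows beyond $q$ are padding, where the horizontal constraints merely pass the accumulator through and leave everything else free, keeping the board exactly $n\times n$. A direct check then shows that the grid admits a valid tiling if and only if the formula has a satisfying assignment (read off from the first components of any single row), which completes the reduction since all $O(n^2)$ relations have constant size and are computable in logspace from the formula.

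I expect the main obstacle to be bookkeeping rather than conceptual. Since the constraints are binary and relate only orthogonally adjacent cells, a clause spanning several far-apart columns cannot be tested by any single constraint; this is exactly what forces the accumulator sweep to carry the partial ``satisfied so far'' information across each row, and what makes the boundary cells (initialising the accumulator on the left, reading it off on the right) and the padding the delicate points, so that the grid stays square and no spurious constraint is imposed. As an alternative I could instead cite the NP-completeness of the bounded square-tiling problem and observe that position-dependent binary constraints encode both arbitrary Wang matching rules and a fixed first row (the latter by restricting an incident relation of each boundary cell); but the direct reduction above is self-contained and keeps the alphabet at a fixed constant size, matching the ``large enough alphabet'' hypothesis.
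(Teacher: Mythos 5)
Your proof is correct, but it takes a genuinely different route from the paper's. The paper also reduces from SAT, yet does so generically: it encodes the space-time diagram of a polynomial-time verifier Turing machine (the machine that checks a candidate valuation against the formula $\Psi$) into the domino constraints, invoking the classical Wang-tile simulation of Turing machine computations, and uses the position-dependence (non-uniformity) of the constraints only to hard-code the machine's initial state in a corner, the formula $\Psi$ on the initial row, and acceptance on the top row. Your reduction is instead a direct, gadget-level encoding of 3-SAT: columns carry variable values, vertical constraints copy them unchanged so that all rows read one global assignment, and horizontal constraints run an accumulator sweep along each row that decides its clause, with the needed unary boundary conditions (initializing the accumulator on the left, forcing it to $1$ at the last active column) absorbed into incident binary relations --- a trick the position-dependence of $H_{i,j}$ makes legitimate. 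What the paper's approach buys is brevity and robustness: it works verbatim for any NP language via its verifier and delegates correctness to a standard, well-known construction. What yours buys is self-containedness and an explicit constant alphabet ($|Q|=4$ suffices, comfortably within the lemma's ``large enough alphabet'' hypothesis), at the price of the boundary and padding bookkeeping you identify; those details, together with the trivial NP membership argument that the paper leaves implicit, are all handled soundly, so no gap remains.
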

\begin{proof}
  There exists a Turing machine working in polynomial time (and space) that on input ${(\Psi,v)}$ where $\Psi$ is a SAT formula and $v$ a candidate valuation checks whether $v$ satisfies $\Psi$.
  Then for any given SAT formula $\Psi$, one can produce in LOGSPACE a set of HV-domino constraints that accepts only bi-dimensional configurations ${(a_{i,j})}$ which represent a valid space-time diagram of the above machine which are correctly initialized and with $\Psi$ enforced as the first component of the input. The encoding of space-time diagram of Turing machine inside domino constraints is well-known and usually presented through a fixed set of so-called Wang tiles (see for example~\cite{Ro71}), which are just a uniform set of horizontal constraints ${H\subseteq Q^2}$ and vertical constraints ${V\subseteq Q^2}$.
  Note that since the HV-domino constraints considered here are non-uniform, we can hard-code the initial state of the machine in the lower-left corner of the configuration, the encoding of $\Psi$ in the initial row, and the accepting state of the machine in the top row.
  The reduction from SAT to the HV-domino CSP follows.\qed
\end{proof}

We can now show a lower bound on the model checking of a single formula of $FO^+$. Let $P_k(x)$ denote the formula of $FO^+$ expressing that configuration $x$ has at least $k$ preimages, formally:
  $$P_k(x) \equiv \exists x_1\neq x_2\neq\cdots \neq x_k, \bigwedge_{1\leq i\leq k}x_i\to x.$$

We will consider the following formula $\phi$:
\begin{align*}
  \phi&\equiv \exists x: x\to x\\ &\wedge \bigl(\forall y,\forall y^1,\forall z, (\neg P_1(y)\wedge\neg P_2(y^1)\wedge y\to y^1\wedge y^1\to^+z\wedge z\to x\wedge z\neq x)\Rightarrow \neg P_2(z)\bigr).
\end{align*}
It expresses that there exists a fixed point $x$ such that considering any orbit starting from a configuration $y$ without preimage, which is the unique preimage of its successor $y^1$, and leading to $x$, then  the configuration $z$ occurring in the orbit just before reaching $x$ has only $1$ preimage.

The main result of this section is that the $FO^+$ model checking problem is already hard for formula $\phi$.
The proof uses the HV-domino CSP of Lemma~\ref{lem:hvdomino}.
For each HV-domino CSP problem, we build a deterministic one-dimensional freezing automata network that essentially checks that a configuration $(a_{i,j})$ satisfies the HV-domino constraints.
By one-dimensional we mean a graph which is a line with self-loops on each node.
In this automata network, configurations $(a_{i,j})$ are layed out as one-dimensional configurations so that $a_{i,j}$ and $a_{i+1,j}$ are neighbors in the graph, and therefore H-constraints can be checked locally.
However, $a_{i,j}$ and $a_{i,j+1}$ are far away in the graph, so V-constraints require the dynamics of the automata network to be checked.
The key idea is to use formula $\phi$ above to characterize the part of the dynamics of the automata network that checks all V-constraints for a given candidate configuration $(a_{i,j})$: intuitively, quantifying over orbits starting from a configuration $y$ without preimage and being the unique preimage of its successor ensures that the orbit contains some well-initialized computation, and predicate $\neg P_2$ on the configuration before reaching the fixed point codes the fact that the output of the computation is correct.
The fixed point configuration $x$ in formula $\phi$ represents a candidate configuration ${(a_{i,j})}$ (cleaned from any trace of computation) and, by construction of the automata network, the second part of the formula expresses that for any well-initialized test of a V-constraint the output of the test is a success.
Formula $\phi$ uses predicate $\neg P_2$ to characterize some specific configurations: the key corresponding trick in the construction below is to make Cartesian products of some alphabet with ${\{0,1\}}$ and ensure that the action of the automata network almost always reset to 1 the value of such a ${\{0,1\}}$-component in at least one node.
This ensures that the configuration obtained after one step has at least two preimages.
The situations where it is not the case are exceptional and well-controlled: this helps to identify possible candidates for configurations $y$, $y^1$ and $z$ in formula $\phi$.

\begin{theorem}\label{theo:fop-hardness}
  Checking whether a given deterministic freezing automata network $(G,\mathcal{F})$ verifies $\phi$ is NP-hard, even when restricted to bounded alphabet, and degree 3 and pathwidth 1.
\end{theorem}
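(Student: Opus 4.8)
The plan is to reduce from the HV-domino CSP of Lemma~\ref{lem:hvdomino}. Given horizontal and vertical constraint lists $H_{i,j},V_{i,j}\subseteq Q^2$ on an $n\times n$ grid, I would build a deterministic freezing automata network $\mathcal{A}=(G,\mathcal{F})$ whose graph $G$ is a line with a self-loop at each node (hence pathwidth $1$ and degree $3$), over a bounded alphabet obtained as a Cartesian product of a layer holding a candidate array $(a_{i,j})$, a layer for a verification computation, and a $\{0,1\}$ activity flag. Candidate arrays are laid out one row after another, so that $a_{i,j}$ and $a_{i+1,j}$ sit at adjacent cells of the line and the horizontal constraints $H_{i,j}$ can be enforced purely locally by the update rule, whereas $a_{i,j}$ and $a_{i,j+1}$ are $n$ cells apart and must be compared through the dynamics. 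To check the vertical constraints in a freezing way I would reuse the signal-collision machinery of the previous section: a sweeping computation emits signals that carry the value $a_{i,j}$ across the gap to $a_{i,j+1}$, compares the pair against $V_{i,j}$, and writes a per-cell success/error mark, all with monotone (non-decreasing) local rules.

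The heart of the construction is the preimage-counting gadget that lets the purely logical formula $\phi$ single out the relevant configurations. I would arrange the update so that at essentially every step at least one node has its $\{0,1\}$ flag forced to $1$ irrespective of its previous value; since $0<1$ in the freezing order this is monotone, yet it makes the image ambiguous in that one coordinate, so a generic configuration $c$ in the image of $\mathcal{F}$ admits at least two preimages, i.e. $P_2(c)$ holds. The only configurations escaping this are the \emph{fresh} initializations, which are produced by no update and hence satisfy $\neg P_1$; their unique deterministic successors, which satisfy $\neg P_2$; and the final configuration reached just before a clean fixed point. In the intended orbit $y\to y^1\to\cdots\to z\to x$, the clauses $\neg P_1(y)$, $\neg P_2(y^1)$, $z\to x$, $z\neq x$ therefore force $y$ to be a fresh initialization carrying some candidate array, $y^1$ the successor of $y$ of which $y$ is the unique preimage (so the deterministic verification has started cleanly), and $z$ the last computation configuration before $\mathcal{A}$ erases all traces of computation and settles into the clean fixed point $x$ representing that array. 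The design is completed by making error detection flip the flag behaviour at the cleanup step: if every V-constraint passes, the step $z\to x$ is locally injective at the relevant node and $z$ has a unique preimage ($\neg P_2(z)$), whereas a detected error triggers a flag reset that equips $z$ with two preimages ($P_2(z)$).

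With these pieces, correctness of the reduction should follow. For the fixed point $x$ I take the clean array $x_a$ of a candidate $a$. The qualifying orbits leading to $x_a$ are exactly the deterministic verifications of $a$ launched from its fresh initialization, and $\neg P_2(z)$ holds along such an orbit precisely when all V-constraints of $a$ were found satisfied; the H-constraints are guaranteed separately because they are enforced locally, so only arrays with valid rows ever give rise to a clean fixed point. Hence $\phi$ holds iff there exists an array $a$ satisfying all horizontal and vertical constraints, i.e. iff the HV-domino instance is a yes-instance. The construction is clearly producible in logarithmic space since the local rules are uniform and only the constraint tables $H_{i,j},V_{i,j}$ depend on the input.

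The main obstacle is to make the three requirements compatible at once: the network must be \emph{freezing} (every local rule monotone), it must nonetheless carry out a genuine long-range verification of the V-constraints, and the preimage counts $P_1,P_2$ must carve out exactly the boundary configurations $y,y^1,z$ of that computation and no others. Reconciling monotonicity with the reset-to-$1$ flag trick is delicate --- the reset must be monotone yet create exactly the intended two-preimage ambiguity, and the ``exceptional'' non-$P_2$ configurations must coincide with the genuine computation endpoints rather than with spurious states. Getting error detection to toggle the preimage count of $z$ while keeping the whole dynamics freezing and deterministic is where the careful engineering of the alphabet and local rules will be concentrated.
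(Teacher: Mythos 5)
Your overall architecture coincides with the paper's: a reduction from the HV-domino CSP of Lemma~\ref{lem:hvdomino}, a line with self-loops (degree $3$, pathwidth $1$), a row-major layout making H-constraints local and V-constraints dynamical, signal-based distance-$n$ comparison, and $\{0,1\}$ components with forced resets to $1$ so that $\neg P_1(y)$ and $\neg P_2(y^1)$ pin down fresh initializations while $\neg P_2(z)$ reads off the outcome of the test. You deviate in two ways: you have a single orbit per candidate array check \emph{all} V-constraints (the paper has each orbit test \emph{one} V-constraint and lets the universal quantification over $y$ in $\phi$ do the aggregation), and you let erroneous runs still converge to the clean fixed point, recording failure only in the preimage count of $z$ (the paper instead quarantines every detected error into a spreading state $\bot$). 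Neither deviation is fatal by itself, but the second one removes exactly the mechanism that makes the reduction sound.

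The genuine gap is that you never analyze the \emph{other} fixed points of your network. The inner part of $\phi$ is a universally quantified implication, so any fixed point with no qualifying incoming triple $(y,y^1,z)$ satisfies it vacuously and hence witnesses $\phi$ regardless of the CSP instance; conversely, any junk configuration that happens to have no preimage, to be the unique preimage of its successor, and to reach a clean fixed point $x_a$ can falsify $\phi$ at $x_a$ even when $a$ is a valid solution. Your own design forces extra fixed points to exist: H-violating arrays cannot be fixed (otherwise an array with valid columns but invalid rows would witness $\phi$), so they must evolve to some error fixed point, about which you say nothing. The paper devotes a dedicated mechanism to precisely this: the ``dummy'' $\{0,1\}$ component exists only so that any step creating a fresh $\bot$ produces a configuration with at least two preimages (flip the dummy bit at that node), which is Claim~\ref{claim:errorpreimages}; together with the existence of locally valid but doomed initial configurations (mismatched marks, detected only later when signals meet off a block boundary), this guarantees $\bot^N$ can never witness $\phi$, while the spreading $\bot$ simultaneously ensures (Claim~\ref{claim:coherenttest}) that the only qualifying orbits into clean fixed points are genuine, well-initialized tests. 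Your closing paragraph acknowledges that the ``exceptional'' non-$P_2$ configurations must be controlled, but it frames this as local-rule engineering of the computation itself; the missing idea is global --- every fixed point of the network, including error and garbage ones, must be shown either to encode a solution or to receive at least one qualifying orbit whose $z$ has two preimages --- and without an analogue of the dummy component and Claim~\ref{claim:errorpreimages} your argument does not establish either direction of the equivalence.
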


\begin{proof}
  We proceed by reduction from the HV-domino CSP: given $n$ and constraints $(H_{i,j})$ and $(V_{i,j})$, we build a deterministic automata network ${(G_N,\mathcal{F})}$ with ${N=n^2}$ which verifies $\phi$ if and only if the CSP has a solution.
  $G_N$ is the graph with nodes ${V=\{1,\ldots,N\}}$ and edges ${(i,i)}$ for all $i\in V$ and ${(i,i+1)}$ for all $i<N$ and ${(i,i-1)}$ for all $i>0$.
  $G_N$ has pathwidth 1 and degree 3.
  The automata network $\mathcal{F}$ has four components plus a global error state and uses alphabet ${Q'=Q\times\{0,1\}\times Q_h\times Q_t\cup\{\bot\}}$ (where $Q$ is the alphabet of the HV-domino CSP).
  The freezing order on $Q\times\{0,1\}\times Q_h\times Q_t$ is simply the product of orders on each component, and this order is extended to $Q'$ by taking $\bot$ as a maximal element.
  The overall behavior is as follows (see Figure~\ref{fig:components} and Figure~\ref{fig:euclint}).

  \newcommand\stnm[4]{{\draw[fill=#4] (#2,#3)
    -- ++(1,0) -- ++(0,1)--++(-1,0)--cycle;\draw (#2,#3)+(.5,.5) node {\tiny #1};}}

\newcommand\stateZ[2]{\stnm{$A$}{#1}{#2}{yellow!50!white}}
\newcommand\stateO[2]{\stnm{$\rightarrow$}{#1}{#2}{white}}
\newcommand\stateZO[2]{\stnm{$\rightarrow_1$}{#1}{#2}{red!50!white}}
\newcommand\stateOO[2]{\stnm{$\leftarrow_1$}{#1}{#2}{red!50!white}}
\newcommand\stateZZO[2]{\stnm{$R_1$}{#1}{#2}{red!70!white}}
\newcommand\stateOZO[2]{\stnm{$C$}{#1}{#2}{blue!50!white}}
\newcommand\stateZOO[2]{\stnm{$B$}{#1}{#2}{green!50!white}}
\newcommand\stateOOO[2]{\draw[fill=white!87!black] (#1,#2) rectangle +(1,1);}

  \newcommand\cadre[4]{{\draw[very thick,dotted] (#2,#3)
    -- ++(1,0) -- ++(0,1)--++(-1,0)--cycle;}}
\newcommand\cadreO[2]{\cadre{}{#1}{#2}{}}
\newcommand\cadreZO[2]{\cadre{}{#1}{#2}{}}
\newcommand\cadreOO[2]{\cadre{}{#1}{#2}{}}

\begin{figure}
  \begin{minipage}{.45\linewidth}
    \begin{center}
      \begin{tikzpicture}[scale=.5]
        \draw[->] (-1,0)-- node[midway,sloped,above] {time} (-1,5);
        \stateO{0}{0}\stateZOO{0}{1}\stateZOO{0}{2}\stateZOO{0}{3}\stateZOO{0}{4}\stateZOO{0}{5}\stateZOO{0}{6}\stateZOO{0}{7}\stateZOO{0}{8}\stateZOO{0}{9}\stateZOO{0}{10}\stateZOO{0}{11}\stateZOO{0}{12}\stateZOO{0}{13}\stateZOO{0}{14}\stateZOO{0}{15}\stateZOO{0}{16}\stateOO{0}{17}\stateZZO{0}{18}\stateOZO{0}{19}\stateZ{1}{0}\stateO{1}{1}\stateZOO{1}{2}\stateZOO{1}{3}\stateZOO{1}{4}\stateZOO{1}{5}\stateZOO{1}{6}\stateZOO{1}{7}\stateZOO{1}{8}\stateZOO{1}{9}\stateZOO{1}{10}\stateZOO{1}{11}\stateZOO{1}{12}\stateZOO{1}{13}\stateZOO{1}{14}\stateZOO{1}{15}\stateOO{1}{16}\stateOZO{1}{17}\stateOZO{1}{18}\stateOZO{1}{19}\stateZ{2}{0}\stateZ{2}{1}\stateO{2}{2}\stateZOO{2}{3}\stateZOO{2}{4}\stateZOO{2}{5}\stateZOO{2}{6}\stateZOO{2}{7}\stateZOO{2}{8}\stateZOO{2}{9}\stateZOO{2}{10}\stateZOO{2}{11}\stateZOO{2}{12}\stateZOO{2}{13}\stateZOO{2}{14}\stateOO{2}{15}\stateOZO{2}{16}\stateOZO{2}{17}\stateOZO{2}{18}\stateOZO{2}{19}\stateZ{3}{0}\stateZ{3}{1}\stateZ{3}{2}\stateO{3}{3}\stateZOO{3}{4}\stateZOO{3}{5}\stateZOO{3}{6}\stateZOO{3}{7}\stateZOO{3}{8}\stateZOO{3}{9}\stateZOO{3}{10}\stateZOO{3}{11}\stateZOO{3}{12}\stateZOO{3}{13}\stateOO{3}{14}\stateOZO{3}{15}\stateOZO{3}{16}\stateOZO{3}{17}\stateOZO{3}{18}\stateOZO{3}{19}\stateZ{4}{0}\stateZ{4}{1}\stateZ{4}{2}\stateZ{4}{3}\stateZO{4}{4}\stateZOO{4}{5}\stateZOO{4}{6}\stateZOO{4}{7}\stateZOO{4}{8}\stateZOO{4}{9}\stateZOO{4}{10}\stateZOO{4}{11}\stateZOO{4}{12}\stateOO{4}{13}\stateOZO{4}{14}\stateOZO{4}{15}\stateOZO{4}{16}\stateOZO{4}{17}\stateOZO{4}{18}\stateOZO{4}{19}\stateZ{5}{0}\stateZ{5}{1}\stateZ{5}{2}\stateZ{5}{3}\stateZ{5}{4}\stateZO{5}{5}\stateZOO{5}{6}\stateZOO{5}{7}\stateZOO{5}{8}\stateZOO{5}{9}\stateZOO{5}{10}\stateZOO{5}{11}\stateOO{5}{12}\stateOZO{5}{13}\stateOZO{5}{14}\stateOZO{5}{15}\stateOZO{5}{16}\stateOZO{5}{17}\stateOZO{5}{18}\stateOZO{5}{19}\stateZ{6}{0}\stateZ{6}{1}\stateZ{6}{2}\stateZ{6}{3}\stateZ{6}{4}\stateZ{6}{5}\stateZO{6}{6}\stateZOO{6}{7}\stateZOO{6}{8}\stateZOO{6}{9}\stateZOO{6}{10}\stateOO{6}{11}\stateOZO{6}{12}\stateOZO{6}{13}\stateOZO{6}{14}\stateOZO{6}{15}\stateOZO{6}{16}\stateOZO{6}{17}\stateOZO{6}{18}\stateOZO{6}{19}\stateZ{7}{0}\stateZ{7}{1}\stateZ{7}{2}\stateZ{7}{3}\stateZ{7}{4}\stateZ{7}{5}\stateZ{7}{6}\stateZO{7}{7}\stateZOO{7}{8}\stateZOO{7}{9}\stateOO{7}{10}\stateOZO{7}{11}\stateOZO{7}{12}\stateOZO{7}{13}\stateOZO{7}{14}\stateOZO{7}{15}\stateOZO{7}{16}\stateOZO{7}{17}\stateOZO{7}{18}\stateOZO{7}{19}\stateZ{8}{0}\stateZ{8}{1}\stateZ{8}{2}\stateZ{8}{3}\stateZ{8}{4}\stateZ{8}{5}\stateZ{8}{6}\stateZ{8}{7}\stateZO{8}{8}\stateOO{8}{9}\stateOZO{8}{10}\stateOZO{8}{11}\stateOZO{8}{12}\stateOZO{8}{13}\stateOZO{8}{14}\stateOZO{8}{15}\stateOZO{8}{16}\stateOZO{8}{17}\stateOZO{8}{18}\stateOZO{8}{19}
        \draw[thick] (3,0) -- (3,20);
        \draw[thick] (6,0) -- (6,20);
      \end{tikzpicture}\vskip .3cm
      $Q_h$ component
    \end{center}
  \end{minipage}
\renewcommand\stateZ[2]{\stnm{$L$}{#1}{#2}{blue!50!white}}
\renewcommand\stateO[2]{\stnm{$\alpha_L$}{#1}{#2}{blue!70!white}}
\renewcommand\stateZO[2]{\stnm{$\alpha_R$}{#1}{#2}{red!70!white}}
\renewcommand\stateOO[2]{\stnm{$\leftarrow_L$}{#1}{#2}{green!50!white}}
\renewcommand\stateZZO[2]{\stnm{$\rightarrow_L$}{#1}{#2}{green!70!white}}
\renewcommand\stateOZO[2]{\stnm{$\leftarrow_R$}{#1}{#2}{green!70!white}}
\renewcommand\stateZOO[2]{\stnm{$\beta$}{#1}{#2}{black!50!white}}
\renewcommand\stateOOO[2]{\stnm{$\gamma$}{#1}{#2}{black!20!white}}
\newcommand\stateZZZO[2]{\stnm{$R$}{#1}{#2}{red!50!white}}
  \begin{minipage}{.45\linewidth}
    \begin{center}
      \begin{tikzpicture}[scale=.5]
        \draw[->] (-1,0)-- node[midway,sloped,above] {time} (-1,5);
        \stateZ{0}{0}\stateZ{0}{1}\stateZ{0}{2}\stateZ{0}{3}\stateZ{0}{4}\stateZ{0}{5}\stateZ{0}{6}\stateZ{0}{7}\stateZ{0}{8}\stateZ{0}{9}\stateZ{0}{10}\stateZ{0}{11}\stateZ{0}{12}\stateZ{0}{13}\stateZ{0}{14}\stateZ{0}{15}\stateZ{0}{16}\stateOOO{0}{17}\stateOOO{0}{18}\stateOOO{0}{19}\stateZ{1}{0}\stateZ{1}{1}\stateZ{1}{2}\stateZ{1}{3}\stateZ{1}{4}\stateZ{1}{5}\stateZ{1}{6}\stateZ{1}{7}\stateZ{1}{8}\stateZ{1}{9}\stateZ{1}{10}\stateZ{1}{11}\stateZ{1}{12}\stateZ{1}{13}\stateZ{1}{14}\stateZ{1}{15}\stateOOO{1}{16}\stateOOO{1}{17}\stateOOO{1}{18}\stateOOO{1}{19}\stateZ{2}{0}\stateZ{2}{1}\stateZ{2}{2}\stateZ{2}{3}\stateZ{2}{4}\stateZ{2}{5}\stateZ{2}{6}\stateZ{2}{7}\stateZ{2}{8}\stateZ{2}{9}\stateZ{2}{10}\stateZ{2}{11}\stateZ{2}{12}\stateZ{2}{13}\stateZ{2}{14}\stateOOO{2}{15}\stateOOO{2}{16}\stateOOO{2}{17}\stateOOO{2}{18}\stateOOO{2}{19}\stateZ{3}{0}\stateZ{3}{1}\stateZ{3}{2}\stateZ{3}{3}\stateZ{3}{4}\stateOO{3}{5}\stateZZO{3}{6}\stateOOO{3}{7}\stateOOO{3}{8}\stateOOO{3}{9}\stateOOO{3}{10}\stateOOO{3}{11}\stateOOO{3}{12}\stateOOO{3}{13}\stateOOO{3}{14}\stateOOO{3}{15}\stateOOO{3}{16}\stateOOO{3}{17}\stateOOO{3}{18}\stateOOO{3}{19}\stateO{4}{0}\stateO{4}{1}\stateO{4}{2}\stateO{4}{3}\stateOO{4}{4}\stateZOO{4}{5}\stateZOO{4}{6}\stateZZO{4}{7}\stateOOO{4}{8}\stateOOO{4}{9}\stateOOO{4}{10}\stateOOO{4}{11}\stateOOO{4}{12}\stateOOO{4}{13}\stateOOO{4}{14}\stateOOO{4}{15}\stateOOO{4}{16}\stateOOO{4}{17}\stateOOO{4}{18}\stateOOO{4}{19}\stateO{5}{0}\stateO{5}{1}\stateO{5}{2}\stateO{5}{3}\stateO{5}{4}\stateO{5}{5}\stateO{5}{6}\stateO{5}{7}\stateZZO{5}{8}\stateZZO{5}{9}\stateZZO{5}{10}\stateZZO{5}{11}\stateOOO{5}{12}\stateOOO{5}{13}\stateOOO{5}{14}\stateOOO{5}{15}\stateOOO{5}{16}\stateOOO{5}{17}\stateOOO{5}{18}\stateOOO{5}{19}\stateZO{6}{0}\stateZO{6}{1}\stateZO{6}{2}\stateZO{6}{3}\stateZO{6}{4}\stateZO{6}{5}\stateZO{6}{6}\stateZO{6}{7}\stateOZO{6}{8}\stateOZO{6}{9}\stateOZO{6}{10}\stateOOO{6}{11}\stateOOO{6}{12}\stateOOO{6}{13}\stateOOO{6}{14}\stateOOO{6}{15}\stateOOO{6}{16}\stateOOO{6}{17}\stateOOO{6}{18}\stateOOO{6}{19}\stateZO{7}{0}\stateZO{7}{1}\stateZO{7}{2}\stateZO{7}{3}\stateZO{7}{4}\stateZO{7}{5}\stateZO{7}{6}\stateOZO{7}{7}\stateOOO{7}{8}\stateOOO{7}{9}\stateOOO{7}{10}\stateOOO{7}{11}\stateOOO{7}{12}\stateOOO{7}{13}\stateOOO{7}{14}\stateOOO{7}{15}\stateOOO{7}{16}\stateOOO{7}{17}\stateOOO{7}{18}\stateOOO{7}{19}\stateZZZO{8}{0}\stateZZZO{8}{1}\stateZZZO{8}{2}\stateZZZO{8}{3}\stateZZZO{8}{4}\stateZZZO{8}{5}\stateZZZO{8}{6}\stateZZZO{8}{7}\stateZZZO{8}{8}\stateOOO{8}{9}\stateOOO{8}{10}\stateOOO{8}{11}\stateOOO{8}{12}\stateOOO{8}{13}\stateOOO{8}{14}\stateOOO{8}{15}\stateOOO{8}{16}\stateOOO{8}{17}\stateOOO{8}{18}\stateOOO{8}{19}
        \cadreO{0}{0}\cadreOO{0}{17}\cadreO{1}{1}\cadreOO{1}{16}\cadreO{2}{2}\cadreOO{2}{15}\cadreO{3}{3}\cadreOO{3}{14}\cadreZO{4}{4}\cadreOO{4}{13}\cadreZO{5}{5}\cadreOO{5}{12}\cadreZO{6}{6}\cadreOO{6}{11}\cadreZO{7}{7}\cadreOO{7}{10}\cadreZO{8}{8}\cadreOO{8}{9}
        \draw[very thick] (3,0) -- (3,20);
        \draw[very thick] (6,0) -- (6,20);
      \end{tikzpicture}\vskip .3cm
      $Q_t$ component
    \end{center}
  \end{minipage}
  \caption{Example of valid orbit with ${n=3}$ starting from a configuration testing the V-constraint $V_{2,2}$ and resulting in a positive output. The trajectory of the $Q_h$ head is reproduced on the $Q_t$-component to clarify the interactions. The vertical thick lines represent separations between consecutive blocks.\label{fig:components}}
\end{figure}
    \begin{figure}\centering
      \begin{tikzpicture}[scale=.45]
        \draw (0,0) -- (9,0) -- (9,18) -- (0,18) -- cycle;
        \draw[thick] (3,0) -- (3,18);
        \draw[thick,color=blue] (5,0) node[below] {\tiny left mark} -- +(0,18);
        \draw[thick] (6,0)  -- (6,18);
        \draw[thick,color=red] (8,0) node[below] {\tiny right mark} -- +(0,18);
        \draw[very thick,dotted,color=gray] (0,0) -- (9,9) -- (0,18);
        % \draw[fill] (5,5) circle [radius=.1cm];
        \draw[->,>=latex] (12,2) node[right] {\small left/right heads of the $Q_t$-component} -- (4,6);
        % \draw[fill] (8,8) circle [radius=.1cm];
        \draw[->,>=latex] (12,2) -- (7,9);
        \draw[->,>=latex] (12,14) node[right] {\small head of $Q_h$-component} -- (7,11);
        \draw[very thick,dotted,color=green] (5,5) -- (3,7) -- (6,10);
        \draw[very thick,dotted,color=green] (8,8) -- (6,10);
        \draw[->,>=latex] (12,9) node[right] {\small test success iff heads meet on block boundary} -- (6,10);
        \draw[->,>=latex] (12,18) node[right] {\small block boundaries} -- (6,15);
        \draw[->,>=latex] (12,18) -- (3,17);
      \end{tikzpicture}
      \caption{Euclidean non-discretized representation of the verification process ensuring that the marks in two consecutive blocks have the same offset within the block and thus mark two positions which are vertical neighbors in the matrix ${(a_{i,j})}$, \textit{i.e.} of the form $a_{i,j}$ and $a_{i+1,j}$.\label{fig:euclint}}
    \end{figure}
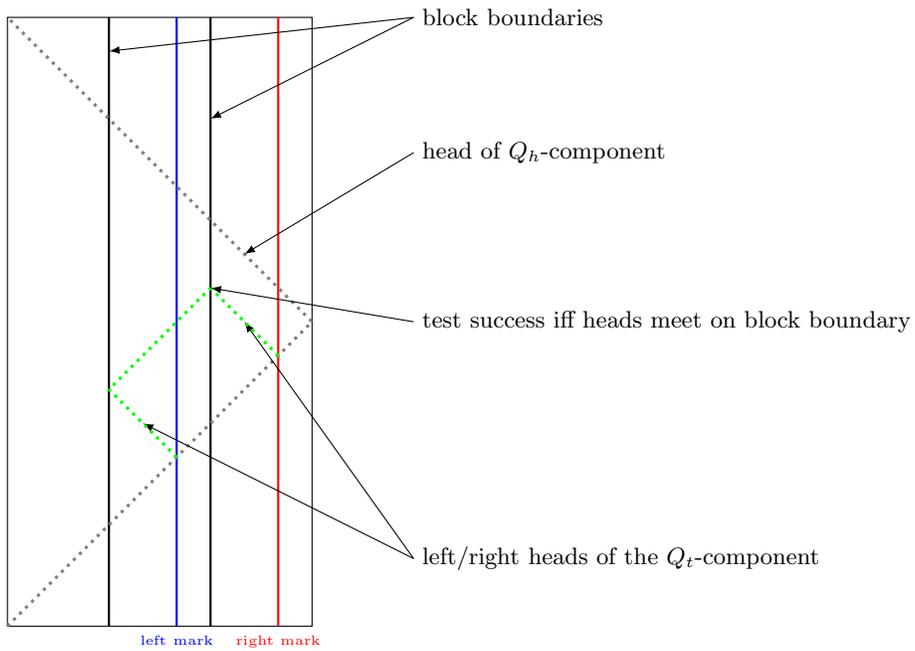

  \begin{itemize}
  \item $\bot$ is an invariable spreading error state: as soon as some node is in state $\bot$, its neighbors change to $\bot$ in one step.
  \item The $Q$-component contains a candidate configuration ${(a_{i,j})}$ written as a one-dimensional word 
    ${a_{1,1}\cdots a_{1,n}a_{2,1}\cdots a_{2,n}\cdots a_{n,1}\cdots a_{n,n}}$.
    The block of nodes ${(j-1)n+1}$ to ${(j-1)n+n}$ will be referred to as \emph{block $j$} and it contains line $j$ of the matrix ${(a_{i,j})}$ in its $Q$-component.
    This component never changes, except when an error state $\bot$ invades the network, or when some H-constraint $H_{i,j}$ is violated at some node in which case a $\bot$ state is generated.
    The freezing order on this component can be chosen arbitrarily.
  \item The $\{0,1\}$-component is called \emph{dummy component} which never changes, has no influence on other components, and is just here to ensure that any configuration leading to $\bot^N$ has enough preimages (see Claim~\ref{claim:errorpreimages} below).
  \item The $Q_h$ component handles a global control \emph{head} whose main behavior is a back-and-forth movement from node $1$ to node $N$ and back to node $1$.
    More precisely, the head do so on a set $\Sigma$ of well-formed configurations and any ill-formed configuration is detected locally and generates an error state $\bot$.
    We set ${Q_h=\{0,1\}\times\{0,1\}\times\{A,B,C,\rightarrow,\rightarrow_0,\rightarrow_1,\leftarrow_1,\leftarrow_0,R_0,R_1\}}$ with freezing order ${0<1}$ on the two $\{0,1\}$ components and \[A<\rightarrow<\rightarrow_0<\rightarrow_1<B<\leftarrow_1<\leftarrow_0<R_0<R_1<C\]
 on the remaining component.
    $\Sigma$ is defined by forbidding a set of pairs of states to occur two adjacent symbols ${c_ic_{i+1}}$ from the third component.
    Moreover, we add the constraint that node $1$ cannot be in state $A$, and that the first $\{0,1\}$ component at this node must be $1$.
    Precisely, configurations authorized in $\Sigma$ are the following (without considering the $\{0,1\}$ components):
    \begin{enumerate}
    \item ${\rightarrow A^{N-1}}$ or ${\rightarrow_0 A^{N-1}}$ or ${\rightarrow_1 A^{N-1}}$,
    \item ${B^i\rightarrow A^{N-i-1}}$ or ${B^i\rightarrow_0 A^{N-i-1}}$ or ${B^i\rightarrow_1 A^{N-i-1}}$,
    \item ${B^{N-1}\rightarrow_0}$ or ${B^{N-1}\rightarrow_1}$,
    \item ${B^{N-1}\leftarrow_0}$ or ${B^{N-1}\leftarrow_1}$,
    \item ${B^i\leftarrow_0 C^{N-i-1}}$ or ${B^i\leftarrow_1 C^{N-i-1}}$,
    \item ${\leftarrow_1 C^{N-1}}$ or ${\leftarrow_0 C^{N-1}}$,
    \item ${R_1 C^{N-1}}$ or ${R_0 C^{N-1}}$,
    \item ${C^N}$.
    \end{enumerate}
    The head is the unique arrow occurring in each configuration and its dynamics is as follows.
    It moves to the right in a background of As and letting symbols B behind (configuration types 1, 2 and 3).
    At each move to the right, the first ${\{0,1\}}$ component of the node left by the head is reset to $1$.
    When doing so it can turn at some point to state $\rightarrow_1$ or $\rightarrow_0$ depending on the layer of states $Q_t$ as detailed below: these states represent a head holding a YES/NO bit of information about the output of the test process happening on component $Q_t$.
    This bit must appear before reaching node $N$ and once appeared, this bit of information never changes in the future.
    When reaching node $N$ the head starts to move to the left, progressing in a background of Bs and letting symbols C behind (configuration types 4 and 5).
    At each move to the left, the second ${\{0,1\}}$ component of the node left by the head is reset to $1$.
    The fact that some ${\{0,1\}}$ component is reset to $1$ at each head move ensures that the corresponding configurations have more than one preimage (which is a key aspect when considering formula $\phi$).
    Finally, the head reaches node $1$ and must hold the output bit of the test process (configuration type 6), maintain it one step (configuration type 7), and finally erase it (type 8).
    Also, when reaching a configuration of type 6 at node $1$, the bit of the second ${\{0,1\}}$ component is reset to $1$ when the head at node $1$ is $\leftarrow_1$ and unchanged when it is $\leftarrow_0$.
    This bit is reset to $1$ in any case for configurations of type 7.
    As a result, a configuration of type 7 has exactly one preimage if and only if it is ${R_1C^{N-1}}$.
  \item The $Q_t$ component is the \emph{test} component, its role is to mark two positions in the configuration and interact with the head component in order to check a single V-constraint on the candidate configuration hold in the $Q$-component.
    More precisely, the test component ensures that the two marked positions are at distance $n$ (\textit{i.e.} they correspond to two vertical neighbors in the grid $(a_{i,j})$) and gathers locally at some node the information on the corresponding pair of states ${a_{i,j}a_{i,j+1}}$ and the constraint $V_{i,j}$ so that the head can check whether ${a_{i,j}a_{i,j+1}\in V_{i,j}}$. See Figure~\ref{fig:euclint} for an Euclidean intuition of how the distance equality test works.
    This behavior is implemented using alphabet ${Q_t = Q\times Q\times\{L,R,\alpha_L,\alpha_R,\beta,\gamma,\leftarrow_L,\leftarrow_R,\rightarrow_R\}}$ with freezing order: 
\[L<R<\alpha_L<\alpha_R<\leftarrow_L<\beta<\rightarrow_L<\leftarrow_R<\gamma.\]
    The third sub-component of $Q_t$ is used to mark two positions in the configuration as well as check that the distance between the two marked positions is exactly $n$ so that they indeed correspond to a pair of positions $(i,j)$ and $(i,j+1)$ in in the matrix ${(a_{i,j})}$.
    Its behavior is based on a set of valid configurations $\Sigma^+$ defined by local rules and synchronized with the $Q_h$ component.
    States $\leftarrow_L,\leftarrow_R,\rightarrow_R$ are called ``left/right arrows'' of the $Q_t$-component and are generated at specific positions when the global $Q_h$-head passes by (see Figure~\ref{fig:euclint}). 
    The two $Q$-sub-components of $Q_t$ are forced to hold states $a_{i,j}$ and $a_{i,j+1}$ respectively on valid configurations, and allow to check the V-constraint ${(a_{i,j},a_{i,j+1})\in V_{i,j}}$.
    The conditions defining $\Sigma^+$ are local (\textit{i.e.} they can be defined as a list of admissible pair of states between neighboring nodes) and a $\bot$ state is triggered whenever and wherever an invalid local pattern is detected.
    The conditions are the following:
    \begin{itemize}
    \item First, in the absence of a left-moving head in the $Q_h$ component, the two $Q$-sub-component must be uniform: each one is of the form $q^N$ for some $q\in Q$.
      When there is a left-moving head in the $Q_h$ component, each $Q$-sub-component is of the form: $q^iq_0^{N-i}$ where $q_0$ is the maximal state of $Q$ and $i$ is the position of the $Q_h$ head.
    \item Then, there are five types of admissible configurations on the third sub-component of $Q_t$:
      \begin{enumerate}
      \item ${L^*\alpha_L^+\alpha_R^+R^*}$ and $\alpha_L$ and $\alpha_R$ segments are forbidden to cross a block boundary (\textit{i.e.} node ${(n,j)}$ has an $\alpha_L$ if and only if ${(1,j+1)}$ has an $\alpha_R$),
      \item ${L^*\leftarrow_L\beta^*\alpha_L^*\alpha_R^*R^*}$,
      \item ${L^*\gamma^*\rightarrow_L\beta^*\alpha_L^*\alpha_R^*R^*}$ or ${L^*\gamma^*\rightarrow_L\beta^*\alpha_L^*\alpha_R^*\leftarrow_R\gamma^*R^*}$,
      \item ${L^*\gamma^*\rightarrow_L\leftarrow_R\gamma^*R^*}$,
      \item any configuration of the form ${c\gamma^*}$ where $c$ is the prefix of a configuration of type 3 or 4.
      \end{enumerate}
    \item Type 4 configurations are only authorized when $\rightarrow_L$ and $\leftarrow_R$ states meet at a bloc boundary, \textit{i.e.} are at positions of the form ${(n,j)}$ and ${(1,j+1)}$ (respectively).
    \item Moreover, only $L$, $\alpha_L$, $\alpha_R$ and $R$ are authorized in a node whose $Q_h$ component is in state $A$, therefore a type 1 configuration on the $Q_h$ component admits only a type $1$ configuration on the $Q_t$ component.
    \item Finally, in configurations of type $1$, let ${(i,j)}$ (\textit{i.e.} ${n(j-1)+i}$) be the leftmost node in state $\alpha_L$ and let $m$ be the rightmost node in state $\alpha_R$.
      Denote by $a$ and $b$ the states of the first and second $Q$-sub-component respectively.
      Then it must hold that $a$ is the state of the $Q$-component (the global one of the alphabet $Q'$) of node ${(i,j)}$ and $b$ is the state of the $Q$-component of node $m$.
    \end{itemize}
    The dynamics of this $Q_t$-component is as follows and respects the type order of configuration described above:
    \begin{itemize}
    \item Type 1 configurations don't change until the head of the $Q_h$-component arrives at node ${(i,j)}$ where it generates a $\leftarrow_L$ state.
    \item Then, $\leftarrow_L$ propagates in the $L$ background, letting $\beta$ states behind and until position ${(1,j)}$ is reached (\textit{i.e.} the first position to the left which is at the beginning of a bloc).
      Then, the arrow bounces by turning into $\rightarrow_L$ and starts to progress to the right letting $\gamma$ states behind.
    \item Meanwhile, when the $Q_h$ head reaches position $m$ (the rightmost node in state $\alpha$), it launches a $\leftarrow_R$ state in the $Q_t$ layer which starts to propagate to the left letting $\gamma$ sates behind.
    \item Also, when the $Q_h$ head bounces on node $N$ and starts to propagate to the left, it writes $q_0$ on each $Q$-sub-component and $\gamma$ on the third sub-component of $Q_t$, thus erasing progressively any information about the marked positions and the V-constraint being tested.
    \item The dynamics ends into the fixed point equal to $q_0^N$ on each $Q$-sub-component and $\gamma^N$ on the third sub-component.
    \end{itemize}
    Finally the $Q_t$-component influences the $Q_h$-component as follows: when the head of the $Q_h$-component of type $\rightarrow$ reaches node ${(i,j)}$ it becomes $\rightarrow_1$ if ${(a,b)\in V_{i,j}}$ (where $a$ and $b$ are the states of the $Q$-sub-components) and $\rightarrow_0$ else.
  \end{itemize}

  Let us now prove that this construction has the desired property.
  Let's call valid orbit any orbit without occurrence of $\bot$.
  \begin{claim}[$\phi$ checks V-constraints on valid orbits]\label{claim:coherenttest}
    Consider any valid orbit starting from a configuration $y$ without preimages, with ${y\to y^1}$ and ${\neg P_2(y^1)}$, and reaching a fixed point $x$.
    Then $y$ is of type 1 on components $Q_h$ and $Q_t$.
    Moreover, a correctly encoded test of V-constraint $V_{i,j}$ is encoded in component $Q_t$ and the configuration $z$ such that ${y\to^+z}$ and ${z\to x}$ verifies ${\neg P_2(z)}$ if and only if ${a_{i,j}a_{i,j+1}\in V_{i,j}}$.
  \end{claim}
  \begin{proof}
    Since there is no occurrence of $\bot$, the whole orbit belongs to $\Sigma^+$.
    A configuration of type 8 or 9 in the $Q_h$ component always has a preimage so $y$ is not of this type.
    A configuration of type 2,3,4,5,6 or 7 has a moving head that reset some $\{0,1\}$ component to $1$, so it cannot be the unique preimage of its successor, contradicting the hypothesis on $y^1$.
    Therefore $y$ is of type 1 on components $Q_h$ and $Q_t$.
    Then, by construction, the marked positions in the $Q_t$ component are at distance $n$ and there is a well-formed V-constraint test happening (otherwise a $\bot$ would be generated later in the orbit).
    The dynamics of the automata networks then ensures that the $Q_h$ heads holds the bit of information corresponding to the validity of the encoded V-constraint: it is $1$ if and only if ${a_{i,j}a_{i,j+1}\in V_{i,j}}$.
    The dynamics ends in a fixed point $x$ which has a configuration of type 8 on the $Q_h$-component.
    Already when reaching a configuration of type 6 or 7 or 8 on the $Q_h$-component, all the $Q_t$-component has been reset to a default value.
    Therefore it holds that the bit of information in the head is $1$ if and only if the type 7 configuration reached ${z=R_1C^{N-1}}$ has a unique preimage.\qed
  \end{proof}
  
  From the construction and Claim~\ref{claim:coherenttest} it should be clear that if the HV-domino CSP has a solution ${(a_{i,j})}$, then one can encode it into a fixed point configuration $x$ that satisfies the orbit property expressed in $\phi$ for all admissible choices of initial configuration $y$ (because all admissible V-constraint tests are satisfied by the CSP solution).
  In this case the automata network verifies $\phi$.

  Conversely, if the automata network verifies $\phi$ and if the fixed point $x$ can be chosen to be a configuration without $\bot$, then this configuration encodes a solution to the HV-domino CSP by Claim~\ref{claim:coherenttest} and because any valid V-constraint test can be encoded in an appropriate initial configuration $y$.
  It remains to discard the possibility that $\phi$ is valid because $x$ is chosen to be the invalid fixed-point $\bot^N$, this is the purpose of the following claim.
  
  \begin{claim}[$\phi$ discards invalid orbits]\label{claim:errorpreimages}
    Consider three configurations $y,z,x$ such that ${y\to^+z\to x}$ and ${x\to x}$ and $x\neq z$. If ${\neg P_2(z)}$ then $x$ cannot be the configuration $\bot^N$.
  \end{claim}
  \begin{proof}
    First $z$ must have an occurrence of $\bot$ because it is impossible that the preimage $z'$ of $z$ be everywhere correct and in one step becomes a configuration $z$ everywhere incorrect but without occurrence of $\bot$: indeed, by construction, the changes not involving $\bot$ state that can occur in a configuration in one step are only in the neighborhood of arrow states of both $Q_h$ and $Q_t$ components, and they have a bounded number of occurrences by definition of $\Sigma^+$.
    Moreover, there must be an occurrence of $\bot$ in $z$ at position $i$ such that ${z'(i)\neq\bot}$.
    Indeed, otherwise it would imply ${z=\bot^N}$ which is impossible under the hypothesis.
    Therefore by just changing the dummy component at $i$ in $z'$ we produce another preimage of $z$, so $P_2(z)$ holds which is a contradiction.\qed
  \end{proof}
  We have thus shown that the HV-domino CSP has a solution if and only if the automata network verifies $\phi$.
  The theorem follows since the construction can be computed efficiently (actually in LOGSPACE).\qed
\end{proof}

\section{Conclusion}

 Our results contribute to the following global picture about computability and complexity, comparing both finite automata network versus infinite CAs, and the freezing case versus the general case.
 Each cell of the table is divided between the general case (lower left in black) and the freezing case (upper right in blue).

\newcommand{\generalfreezing}[2]{{\diagbox[width=.4\textwidth]{#1}{\color{blue}#2}}}
\newcommand{\resref}[2]{
  \begin{tabular}{c}
    #1\\
    {\tiny #2}
  \end{tabular}
}

\begin{center}
\begin{tabular}{c|c|c}
  & Infinite 1D CA & Finite bounded pathwidth AN \\
  \hline
  Nilpotency & \generalfreezing{Undecidable \cite{kari92}}{\resref{Decidable}{\cite[Theorem 2]{dmtcs:9004}}} & \generalfreezing{\resref{PSPACE-complete}{\cite[Corollary 3.2]{stacs.2021.32}}}{\resref{co-NL}{(Corollary~\ref{cor:nilconl})}} \\
  \hline
  \resref{Regular trace}{properties} & \generalfreezing{Undecidable}{\resref{Undecidable}{\cite[Theorem 5]{dmtcs:9004}}} & \generalfreezing{\resref{PSPACE-complete}{\cite[Theorem 3.3]{stacs.2021.32}}}{\resref{NL-complete}{(Theorem~\ref{theo:nlhard})}}  \\
  \hline
  \fop & \generalfreezing{Undecidable \cite{kari92}}{Open} & \generalfreezing{\resref{PSPACE-complete}{\cite[Corollary 3.2]{stacs.2021.32}}}{\resref{NP-hard}{(Theorem~\ref{theo:fop-hardness})}} 
\end{tabular}
\end{center}

The obvious continuation of our work would be to study model checking of \fop logic for one-dimensional freezing cellular automata. We conjecture that there exists a fixed formula ${\phi\in FO^+}$ such that determining whether a given freezing CA has property $\phi$ is undecidable. The table above recall that such a property $\phi$, if it exists, cannot be equivalent to the nilpotency property.

\section{Declarations}

\noindent\textbf{Ethical statements.} \textit{Not applicable.}\\

\noindent\textbf{Competing interest.}  \textit{Not applicable.}\\

\noindent\textbf{Authors' contributions.} \textit{Contribution to be considered equal among all authors, alphabetical order used.}\\

\noindent\textbf{Funding.} \textit{Research partially supported by projects STIUC-AMSUD 22-STIC-02 (all authors), Fondecyt-ANID 1200006 (EG), FONDECYT-ANID 1230599 (PM), ANID FONDECYT Postdoctorado 3220205 (MRW).}\\

\bibliographystyle{plain}
\bibliography{paper}

\end{document}